\newcommand*\Let[2]{\State #1 = #2} 
\algrenewcommand\alglinenumber[1]{
    {\sf\footnotesize#1}}
\algrenewcommand\Return{\State \algorithmicreturn{} }%
\def\htto{\rightarrow}
\newif\ifcamera\cameratrue
\newtheorem{theorem}{Theorem}
\newtheorem{lemma}[theorem]{Lemma}
\newtheorem{claim}[theorem]{Claim}
\newtheorem{corollary}[theorem]{Corollary}
\newtheorem{definition}[theorem]{Definition}
\newtheorem{result}{Result}
\newcommand{\cA}{\mathcal{A}}
\newcommand{\cC}{\mathcal{C}}
\newcommand{\cD}{\mathcal{D}}
\newcommand{\cG}{\mathcal{G}}
\newcommand{\cK}{\mathcal{K}}
\newcommand{\cP}{\mathcal{P}}
\newcommand{\cT}{\mathcal{T}}
\newcommand{\cV}{\mathcal{V}}
\newcommand{\ga}{\alpha}
\newcommand{\gb}{\beta}
\newcommand{\gc}{\gamma}
\newcommand{\eps}{\epsilon}
\newcommand{\clean}{\mathop{clean}}
\newcommand{\trans}{\tau}
\newcommand{\Sec}[1]{\hyperref[sec:#1]{\S\ref*{sec:#1}}} 
\newcommand{\Eqn}[1]{\hyperref[eq:#1]{(\ref*{eq:#1})}} 
\newcommand{\Fig}[1]{\hyperref[fig:#1]{Fig.\,\ref*{fig:#1}}} 
\newcommand{\Tab}[1]{\hyperref[tab:#1]{Tab.\,\ref*{tab:#1}}} 
\newcommand{\Thm}[1]{\hyperref[thm:#1]{Theorem\,\ref*{thm:#1}}} 
\newcommand{\Lem}[1]{\hyperref[lem:#1]{Lemma\,\ref*{lem:#1}}} 
\newcommand{\Prop}[1]{\hyperref[prop:#1]{Prop.~\ref*{prop:#1}}} 
\newcommand{\Cor}[1]{\hyperref[cor:#1]{Corollary~\ref*{cor:#1}}} 
\newcommand{\Def}[1]{\hyperref[def:#1]{Definition~\ref*{def:#1}}} 
\newcommand{\Alg}[1]{\hyperref[alg:#1]{Alg.~\ref*{alg:#1}}} 
\newcommand{\Ex}[1]{\hyperref[ex:#1]{Ex.~\ref*{ex:#1}}} 
\newcommand{\Clm}[1]{\hyperref[clm:#1]{Claim~\ref*{clm:#1}}} 
\newcommand{\App}[1]{\hyperref[#1]{Appendix~\ref*{#1}}} 
\newcommand{\Res}[1]{\hyperref[res:#1]{Result\,\ref*{res:#1}}} 
\DeclareMathOperator{\poly}{poly}
\author{Rishi Gupta\thanks{Supported in part by the ONR PECASE Award
of the second author.}  \\\\
  Stanford University\\
  {\tt rishig@cs.stanford.edu}
\and Tim Roughgarden\thanks{
This research was supported in part by NSF Awards
CCF-1016885 and CCF-1215965, an AFOSR MURI grant, and an ONR PECASE Award.}
\\\\
Stanford University\\
{\tt tim@cs.stanford.edu}
\and C. Seshadhri \\\\
Sandia National Labs, CA\thanks{Sandia National Laboratories is a multi-program laboratory managed and operated by Sandia Corporation, a wholly owned subsidiary of Lockheed Martin Corporation, for the U.S. Department of Energy's National Nuclear Security Administration under contract DE-AC04-94AL85000.} \\
{\tt scomand@sandia.gov}}
\title{Decompositions of Triangle-Dense Graphs\thanks{A preliminary
version of this paper appeared in
the Proceedings of the 5th Innovations in
Theoretical Computer Science Conference, January 2014.}}
\date{}
\begin{document}

\maketitle

\begin{abstract}
High triangle density --- the graph property stating that a constant fraction of two-hop paths
belong to a triangle --- is a common signature
of social networks.  This paper studies triangle-dense
graphs from a structural perspective.  We prove constructively
that significant portions of a triangle-dense graph are contained in
a disjoint union of dense, radius $2$ subgraphs.  This result
quantifies the extent to which triangle-dense graphs resemble
unions of cliques.  We also show that our algorithm recovers
planted clusterings in approximation-stable $k$-median instances.
\end{abstract}

\section{Introduction} \label{sec:intro}

Can the special structure possessed by social networks
be exploited algorithmically?
Answering this question requires a formal definition of ``social
network structure.''
Extensive work on this topic has generated countless proposals but
little consensus (see e.g.~\cite{ChFa06}).
The most oft-mentioned (and arguably most validated) statistical
properties of social networks include heavy-tailed degree
distributions~\cite{BaAl99,BrKu+00,FFF99},
a high density of
triangles~\cite{WaSt98,SaCaWiZa10,UgKa+11}
and other dense subgraphs or
``communities''~\cite{For10,GiNe02,Ne03a,Ne06,LeLaDaMa08},
and low diameter
and the small world property~\cite{Kl00,Kl00-2,Kl02,Ne01}.

Much of the recent mathematical work on social networks has focused on
the important goal of developing generative models that produce random
networks with many of the above statistical properties.
Well-known examples of such models include preferential
attachment~\cite{BaAl99} and  related copying models~\cite{KuRa+00}, Kronecker
graphs~\cite{ChZhFa04,LeChKlFa10}, and the Chung-Lu random graph model~\cite{ChLu02,ChLu02-2}.
A generative model articulates a hypothesis about what ``real-world''
social networks look like, and is directly useful for generating
synthetic data.
Once a particular generative model of social networks is
adopted, a natural goal is to design algorithms tailored to perform
well on the instances generated by the model. It can also be used as a proxy
to study the effect of random processes (like edge deletions) on a network.
Examples of such results include~\cite{AlJeBa00,LiAm+08,MoSa10}.

This paper pursues a different approach.
In lieu of adopting a particular generative model for social networks,
we ask:
\begin{itemize}

\item []
{\em Is there a combinatorial assumption weak enough to hold in
  every ``reasonable'' model of social networks, yet strong enough to permit
  useful structural and algorithmic results?}

\end{itemize}
That is, we seek algorithms that offer non-trivial guarantees for {\em
  every} reasonable model of social networks, including those yet to
be devised.

\subsubsection*{Triangle-Dense Graphs}

We initiate the algorithmic study of {\em triangle-dense graphs}. Let a \emph{wedge} be a two-hop path in an undirected graph.
\medskip
\begin{definition}[Triangle-Dense Graph] \label{def:dense}
The {\em triangle density} of an undirected graph~$G=(V,E)$ is
$\trans(G) := 3t(G)/w(G)$, where $t(G)$ is the number of triangles
in~$G$ and~$w(G)$ is the number of wedges in~$G$ (conventionally, $\trans(G) = 0$ if $w(G)=0$).
The class of {\em $\eps$-triangle dense graphs} consists of the graphs~$G$
with $\trans(G) \geq \eps$.
\end{definition}
Since every triangle of a graph contains~3 wedges, and no two
triangles share a wedge, the triangle density of a graph is between~0
and~1.  We use the informal term ``triangle
dense graphs'' to mean graphs with constant triangle density.
In the social sciences, triangle density is usually
called the {\em transitivity} of a graph~\cite{WaFa94}.
We use the term triangle density because ``transitivity'' already has strong
connotations in graph theory.

As an example, the triangle density of a graph is 1 if and only if it is
the union of cliques.  The triangle density of an
Erd\"os-Renyi graph, drawn from~$G(n,p)$, is concentrated around~$p$.
Thus, only very dense Erd\"os-Renyi graphs have constant triangle
density.  Social networks are generally sparse and yet have remarkably
high triangle density; the Facebook graph, for instance, has triangle
density~$0.16$~\cite{UgKa+11}.
High triangle density is perhaps the least controversial signature of
social networks (see related work below).

The class of $\eps$-triangle dense graphs becomes quite diverse
as soon as $\eps$ is bounded below~1.
For example, the complete tripartite graph is triangle dense.
Every graph obtained from a bounded-degree graph by
replacing each vertex with a triangle is triangle dense.
Adding a clique on $n^{1/3}$ vertices
to a bounded-degree $n$-vertex graph produces a triangle-dense graph.
We give a litany of examples in \Sec{bad}.
Can there be interesting structural or algorithmic results for this
rich class of graphs?
\vspace{-10pt}
\subsubsection*{Our Results: A Decomposition Theorem}

Our main decomposition theorem quantifies the extent to which a
triangle-dense graph resembles a union of cliques.
The next definition gives our notion of an ``approximate union of
cliques.''  We use $G|_S$ to denote the subgraph of a graph~$G$
induced by a subset~$S$ of vertices.  Also, the \emph{edge density} of a
graph~$G=(V,E)$ is $|E|/\binom{|V|}{2}$.
\medskip
\begin{definition}[Tightly Knit Family] \label{def:part}
Let $\rho > 0$.
A collection $V_1, V_2, \ldots, V_k$ of disjoint sets of vertices of a
graph $G=(V,E)$ forms a \emph{$\rho$-tightly-knit family} if:
\begin{asparaitem}
	\item Each subgraph $G|_{V_i}$ has both edge density and triangle density at least $\rho$.
	\item Each subgraph $G|_{V_i}$ has radius at most $2$.
\end{asparaitem}
\end{definition}
When $\rho$ is a constant, we often refer simply to a tightly-knit family.
Every ``cluster'' of a tightly-knit family is dense in edges and in
triangles.
In the context of social networks, an abundance of triangles is generally
associated with meaningful social structure.

Our main decomposition theorem states that every triangle-dense graph
contains a tightly-knit family that captures a constant fraction of
the graph's triangles.
\medskip
\begin{result}[Main Decomposition Theorem] \label{res:main}
There exists a polynomial $f$ such that for every $\eps$-triangle dense graph~$G$,
there exists an $f(\eps)$-tightly-knit family that contains an $f(\eps)$ fraction
of the triangles of $G$.
\end{result}

We emphasize that \Res{main} requires only that the input
graph~$G$ is triangle dense --- beyond this property, it could be
sparse or dense, low- or high-diameter, and possess an arbitrary
degree distribution.  Graphs that are not triangle dense, such as
sparse Erd\"os-Renyi random graphs, do not generally admit non-trivial
tightly-knit families (even if the triangle density
requirement for each cluster is dropped).

Our proof of \Res{main} is constructive.  Using suitable data
structures, the resulting algorithm can be implemented to run in time
proportional to the number of wedges of the graph;
a detailed implementation is available from the authors.
%
This running time is reasonable for many social networks.
Our preliminary implementation
of the algorithm requires a few minutes on a commodity laptop to decompose networks
with millions of edges.
%

Note that \Res{main} is non-trivial only because we require that the tightly-knit family preserve the ``interesting social information'' of the original graph, in the form of the graph's triangles.
Extracting a single low-diameter cluster rich in edges and triangles
is easy --- large triangle density implies that typical vertex
neighborhoods have these properties.  But extracting such a cluster
carelessly can do more harm than good, destroying many triangles that
only partially intersect the cluster.
Our proof of \Res{main} shows how to repeatedly extract
low-diameter dense clusters while preserving at least a constant
fraction of the triangles of the original graph.

A triangle-dense graph need not contain a tightly-knit family that
contains a constant fraction of the graph's {\em edges}; see
the examples in~\Sec{bad}.  The culprit is that triangle density is a
``global'' condition and does not guarantee good local triangle
density everywhere, allowing room for a large number of edges that are
intuitively spurious.
Under the stronger condition of constant local triangle density, however,
we can compute a tightly-knit family with a stronger guarantee.
\medskip
\begin{definition}[Jaccard Similarity]
The {\em Jaccard similarity} of an edge~$e=(i,j)$ of a graph~$G=(V,E)$
is the fraction of vertices in the neighborhood of~$e$ that participate in triangles:
\begin{equation}\label{eq:jaccard}
J_e = \frac{|N(i) \cap N(j)|}{|N(i) \cup N(j) \setminus \{i,j\}|},
\end{equation}
where~$N(x)$ denotes the neighbors of a vertex~$x$ in~$G$.
\end{definition}
\medskip
\begin{definition}[Everywhere Triangle-Dense] \label{def:jaccard} A
  graph is \emph{everywhere $\eps$-triangle dense} if $J_e \geq \eps$
  for every edge~$e$, and there are no isolated vertices.
\end{definition}
Though useful conceptually, we would not expect graphs to be
everywhere triangle dense in practice.
The following weaker definition permits graphs that have a small
fraction of edges with low Jaccard similarity.
\medskip
\begin{definition}[$\mu,\eps$-Triangle-Dense] A
  graph is \emph{$\mu,\eps$-triangle dense} if $J_e \geq \eps$
  for at least a $\mu$ fraction of the edges~$e$.
\end{definition}
We informally refer to graphs with constant $\eps$ and high enough $\mu$ as \emph{mostly everywhere triangle dense}.
An everywhere $\eps$-triangle dense graph is $\mu,\eps$-triangle dense
for every $\mu$. An everywhere $\eps$-triangle dense graph is also
$\eps$-triangle dense.

The following is proved as \Thm{edges}.
\medskip
\begin{result}[Stronger Decomposition Theorem]\label{res:main-every}
There are polynomials $\mu, f$ such that for every $\mu(\eps),\eps$-triangle dense graph $G$, there exists an $f(\eps)$-tightly-knit family that contains an $f(\eps)$-fraction
of the \emph{edges and} triangles of $G$.
\end{result}

\paragraph{Applications to Planted Cluster Models.}

We give an algorithmic application of our decomposition in \Sec{bbg}, where
the tightly knit family produced by our algorithm is meaningful
in its own right.
We consider the
approximation-stable metric $k$-median instances introduced by Balcan,
Blum, and Gupta~\cite{BaBlGu13}.
By definition, every solution of an
approximation-stable instance that has
near-optimal objective function value is structurally similar to the
optimal solution. They reduce their problem to clustering a certain
graph with ``planted'' clusters corresponding to the optimal
solution. We prove that our algorithm recovers a close approximation
to the planted clusters, matching their guarantee.
%
%

\subsection{Discussion}

\paragraph{Structural Assumptions vs.\ Generative Models.}
Pursuing structural results and algorithmic guarantees that assume
only a combinatorial condition (namely, constant triangle density),
rather than a particular model of social networks, has clear
advantages and disadvantages.  The class of graphs generated by a
specific model will generally permit stronger structural
and algorithmic guarantees
than the class of graphs that share a single statistical property.
On the other hand, algorithms and results tailored to a
single model can lack robustness: they might not be meaningful if
reality differs from the model, and are less likely to translate
across different application domains that require different models.
Our results for triangle-dense graphs are relevant for every model of
social networks that generates such graphs with high probability, and
we expect that all future social network models will have this
property.  And of course, our results can be used in any application
domain that concerns triangle-dense graphs, whether motivated by
social networks or not.

Beyond generality and robustness, a second reason to prefer
a combinatorial assumption to a generative model is that the
assumption can be easily verified for a given data set.
Since computing the triangle density of a network is a well-studied
problem, both theoretically and practically
(see~\cite{SePiKo13} and the references therein), the extent to
which a network meets the triangle density assumption can be quantified.
By contrast, it is not clear how to argue that a network is a typical
instance from a generative model, other than by verifying various
statistical properties (such as triangle density).  This difficulty
of verification is amplified when there are multiple generative models
vying for prominence, as is currently the case with social and
information networks (e.g.~\cite{ChFa06}).

\noindent
\paragraph{Why Triangle Density?}
Social networks possess a number of statistical signatures, as
discussed above.  Why single out triangle density?
First, there is tremendous empirical support for large triangle
density in social networks.  This property has been studied for
decades in the social
sciences~\cite{HoLe70,Co88,Burt04,Fa06,FoDeCo10},
and recently there have been numerous large-scale studies on online
social networks~\cite{SaCaWiZa10,UgKa+11,SePiKo13}.
Second, in light of this empirical evidence, generative models for
social and information networks are explicitly designed to produce
networks with high
triangle density~\cite{WaSt98,ChFa06,SaCaWiZa10,ViBa12}.
Third, the assumption of constant triangle density seems to impose
more exploitable structure than the other most widely accepted
properties of social and information networks.
For example, the property of having small diameter
indicates little about the structure of a network --- every network
can be rendered small-diameter by adding one extra vertex connected to
all other vertices.  Similarly, merely assuming a power-law degree
distribution does not seem to impose significant restrictions on a
graph~\cite{FePaPa06}.  For example, the Chung-Lu model~\cite{ChLu02} generates
power-law graphs with no natural decompositions.
While constant triangle density is not a
strong enough assumption to exclude all ``obviously unrealistic
graphs,'' it nevertheless enables non-trivial decomposition results.
Finally, we freely admit that imposing one or more combinatorial
conditions other than triangle density could lead to equally
interesting results, and we welcome future work along such lines.
For example, recent work by Ugander, Backstrom, and
Kleinberg~\cite{UgBaKl13} suggests that constraining the frequencies
of additional small subgraphs could produce a refined model of social
and information networks.

\noindent
\paragraph{Why Tightly-Knit Families?}
We have intentionally highlighted the existence and computation of
tightly-knit families in triangle-dense graphs, rather than the
(approximate) solution of any particular computational problem
on such graphs.
Our main structural result quantifies the extent to which we can
``visualize'' a triangle-dense graph as, approximately, a union of
cliques.  This is a familiar strategy for understanding restricted
graph classes, analogous to using separator theorems to make precise
how planar graphs resemble grids~\cite{LiTa79}, tree decompositions to
quantify how
bounded-treewidth graphs resemble trees~\cite{RoSe86}, and the regularity
lemma to describe how dense graphs are approximately composed of
``random-like'' bipartite graphs~\cite{Sze78}.
Such structural results provide a flexible foundation for future
algorithmic applications.
We offer a specific application to recovering planted clusterings and leave
as future work the design of more applications.


\section{An intuitive overview} \label{sec:over}

We give an intuitive description of our proof. Our approach to finding a tightly-knit family
is an iterative extraction procedure. We find a single member of the family, remove this set from the graph (called the extraction),
and repeat. Let us start with an everywhere triangle-dense
graph $G$, and try to extract a single set $S$. It is easy to check that every vertex neighborhood is dense and has many triangles, and would qualify
as a set in a tightly-knit family.
But for vertex $i$, there may be many vertices outside $N(i)$ (the neighborhood of $i$)
that form triangles with a single edge contained in $N(i)$. By extracting $N(i)$, we could destroy too many triangles.
We give examples in \Sec{bad} where such a na\"{i}ve approach fails.

Here is a simple greedy fix to the procedure. We start by adding
$N(i)$ and $i$ to the set $S$. If any vertex
outside $N(i)$ forms many triangles with the edges in $N(i)$, we just add it to $S$. It is not clear that
we solve our problem by adding these vertices to $S$, since the extraction of $S$ could still destroy many triangles.
We prove that by adding at most $d_i$ vertices
(where $d_i$ is the
degree of $i$) with the highest number of triangles to $N(i)$, this
``destruction" can be bounded.
In other words, $G|_S$ will have a high density, obviously has radius $2$ (from $i$), and will contain a constant fraction
of the triangles incident to $S$.

Naturally, we can simply iterate this procedure and hope to get the entire tightly-knit family. But
there is a catch.
We crucially needed the graph to be \emph{everywhere} triangle-dense
for the previous argument. After extracting $S$,
this need not hold.  We therefore employ a \emph{cleaning} procedure that
iteratively removes edges of low Jaccard similarity
and produces an everywhere triangle-dense graph for the next extraction. This procedure also destroys some triangles,
but we can upper bound this number.
As an aside, removing low Jaccard similarity edges has been used for
sparsifying real-world graphs by Satuluri, Parthasarathy, and
Ruan~\cite{SaPaRu11}.

When the algorithm starts with an arbitrary triangle-dense graph $G$,
it first cleans the graph to get an everywhere triangle-dense graph.
We may lose many edges during the initial cleaning, and this is
inevitable, as examples in \Sec{bad} show.
In the end, this procedure constructs a tightly-knit family containing
a constant fraction of the triangles of the original triangle-dense graph.

When $G$ is everywhere or mostly everywhere triangle-dense, we can ensure that the
tightly-knit family contains a constant fraction
of the \emph{edges} as well.
Our proof is a non-trivial charging argument.  By assigning an appropriate
weight function to triangles and wedges, we can charge removed edges
to removed triangles. This (constructively) proves the existence
of a tightly-knit family with a constant fraction of edges and triangles.

\section{Extracting tightly-knit families} \label{sec:extract}

In this section we walk through the proof outlined in \Sec{over}
above. We first bound the losses from the cleaning procedure in
\Sec{clean}. We then show how to extract a member of a tightly-knit
family from a cleaned graph in \Sec{cluster}. We combine these two
procedures in \Thm{cftriangle} of \Sec{family} to obtain a full
tightly-knit family from a triangle-dense graph. Finally, \Thm{edges}
of \Sec{everywhere} shows that the procedure also preserves a constant
fraction of the edges in a mostly everywhere triangle-dense graph.

\subsection{Preliminaries} \label{sec:prelim}

We begin with some notation. Consider a graph $G = (V,E)$.
We index vertices with $i,j,k,\ldots$.
Vertex $i$ has degree $d_i$.
%
Let $S$ be a set of vertices. The number of triangles including some vertex
in $S$ is denoted $t_S$. We use $G|_S$ for the induced subgraph on $G$,
and $t^{(I)}_S$ for the number of triangles in $G|_S$ (the $I$ is for ``internal").
We repeatedly deal with subgraphs $H$ of $G$. We use the $\ldots(H)$
notation for the respective quantities in $H$. So, $t(H)$ would denote
the number of triangles in $H$, $d_i(H)$ denotes the degree of $i$ in $H$, etc.

\subsection{Cleaning a graph} \label{sec:clean}

An important ingredient in our constructive proof is a ``cleaning"
procedure that constructs an everywhere triangle-dense graph.
\medskip
\begin{definition} \label{def:clean} Consider the following procedure $\clean_\eps$ on a graph $H$
that takes input $\eps \in (0,1]$.
Iteratively remove an arbitrary edge with Jaccard similarity less than $\eps$,
as long as such an edge exists.
Finally, remove all isolated vertices.
We call this \emph{$\eps$-cleaning}, and denote the output
by $\clean_\eps(H)$.
\end{definition}
The output $\clean_\eps(H)$ is dependent on the order in which edges
are removed, but our results hold for an arbitrary removal order.
Satuluri et al.~\cite{SaPaRu11} use a more nuanced version of cleaning
for graph sparsification of social networks. They provide much
empirical evidence that removal of low Jaccard similarity
edges does not affect graph structure. Our arguments below may provide some
theoretical justification.
\medskip
\begin{claim} \label{clm:clean} The number of triangles in $\clean_\eps(H)$
is at least $t(H) - \eps w(H)$.
\end{claim}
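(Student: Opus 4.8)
The plan is to track triangle destruction edge-by-edge as the cleaning procedure runs. Each iteration of $\clean_\eps$ removes a single edge $e=(i,j)$ whose Jaccard similarity is less than $\eps$, and the only triangles destroyed by this removal are those containing $e$. The number of such triangles is exactly $|N(i) \cap N(j)|$, the number of common neighbors of the endpoints. So I would first establish the clean accounting statement: the total number of triangles destroyed over the whole procedure equals the sum, over all removed edges $e$, of the number of triangles through $e$ at the moment $e$ is removed.

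The key inequality comes from the defining property of a removed edge. When $e=(i,j)$ is removed, we know $J_e < \eps$, which by \Eqn{jaccard} means
\begin{equation*}
|N(i) \cap N(j)| < \eps \cdot |N(i) \cup N(j) \setminus \{i,j\}|.
\end{equation*}
The left side is the number of triangles destroyed, so each removed edge destroys fewer than $\eps$ times the size of the union of its endpoints' neighborhoods. I would then want to charge this quantity to wedges. The natural observation is that the quantity $|N(i) \cup N(j) \setminus \{i,j\}|$ counts, roughly, the number of wedges centered at $i$ or $j$ that use the edge $e$ — more precisely, each vertex in $N(i) \cup N(j) \setminus \{i,j\}$ together with the edge $e$ forms a distinct wedge (a two-hop path) with $e$ as one of its two hops. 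So each removed edge $e$ destroys fewer than $\eps$ times the number of wedges containing $e$.

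Summing over all removed edges, the total triangle loss is bounded by $\eps$ times the total number of (edge, wedge) incidences over removed edges. The crucial point that makes this telescope correctly is that wedges must be counted in a way that avoids over-counting across iterations: I would assign to each removed edge the wedges that are destroyed \emph{simultaneously with it} (i.e. wedges whose central edge is $e$, counted against the current graph), so that each wedge of the original graph $H$ is charged to at most one removed edge. Since the graphs are only shrinking, a wedge present at the time $e$ is removed was present in $H$, giving a total bound of $\eps \, w(H)$. Thus the number of destroyed triangles is at most $\eps \, w(H)$, and the number surviving in $\clean_\eps(H)$ is at least $t(H) - \eps \, w(H)$.

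The main obstacle I anticipate is making the wedge-charging rigorous without double counting: because edges are removed one at a time and the neighborhood sets change as the graph shrinks, I must ensure that the wedges charged at one removal step are genuinely distinct from those charged at every other step, and that each is a bona fide wedge of the \emph{original} graph $H$. The cleanest way to handle this is to charge each triangle-destroying removal to the wedge consisting of edge $e$ together with a second edge to a common-neighbor-side vertex, and argue via a fixed orientation or labeling that this map from destroyed triangles (and the associated union-neighborhood slack) into wedges of $H$ is injective across the whole run. Verifying that monotonicity of the edge set under removal guarantees every charged wedge indeed lies in $H$ is the delicate bookkeeping step; the arithmetic itself is routine once the injective charge is set up.
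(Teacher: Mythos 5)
Your proposal is correct and takes essentially the same route as the paper's proof: the paper also charges each removed edge $e_l$ with the sets $T_l$ and $W_l$ of triangles and wedges destroyed at the moment of its removal, uses the Jaccard condition at that stage to get $|T_l| \leq \eps(|W_l|-|T_l|) \leq \eps |W_l|$, and concludes via the disjointness of the $W_l$'s that $\sum_l |T_l| \leq \eps\, w(H)$. The ``delicate bookkeeping'' you worry about is exactly this disjointness, which holds automatically since each wedge is destroyed at most once.
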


\begin{proof}
The process $\clean_\eps$ removes a sequence of edges $e_1, e_2, \ldots$.
Let $W_l$ and $T_l$ be the set of wedges and triangles that are removed when $e_l$ is removed.
Since the Jaccard similarity of $e_l$ \emph{at this stage} is at most $\eps$, $|T_l| \leq \eps (|W_l|-|T_l|) \leq \eps |W_l|$.
All the $W_l$'s (and $T_l$'s) are disjoint.
Hence, the total number of triangles removed is $\sum_l |T_l| \leq \eps \sum_l |W_l| \leq \eps w(H)$.
\end{proof}
We get an obvious corollary by noting that $t(H) = \trans(H) \cdot w(H)/3$.
\medskip
\begin{corollary} \label{cor:clean} The graph $\clean_\eps(H)$ is everywhere $\eps$-triangle dense
and has at least $(\trans(H)/3 - \eps) w(H)$ triangles.
\end{corollary}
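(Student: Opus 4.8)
The corollary follows almost immediately from Claim~\ref{clm:clean} together with two observations, so the plan is to assemble these pieces rather than develop any new machinery.

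First I would establish the ``everywhere $\eps$-triangle dense'' claim. By construction, $\clean_\eps$ terminates only when no edge of Jaccard similarity less than $\eps$ remains, and the final step removes all isolated vertices. Thus in the output graph every surviving edge $e$ satisfies $J_e \geq \eps$, and there are no isolated vertices, which is precisely \Def{jaccard}. The one point deserving care is that removing an edge can \emph{change} the Jaccard similarities of other edges (it alters neighborhoods $N(i)$ and $N(j)$), so the termination condition must be read with respect to the current graph at each stage; but since the procedure halts exactly when every remaining edge has similarity at least $\eps$ in the final graph, the conclusion holds for $\clean_\eps(H)$ as stated.

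Next I would get the triangle count. \Clm{clean} already gives that $\clean_\eps(H)$ has at least $t(H) - \eps w(H)$ triangles. Here I would simply substitute the definitional identity $t(H) = \trans(H) \cdot w(H)/3$, which is just \Def{dense} rearranged (recall $\trans(H) = 3t(H)/w(H)$). This yields a triangle count of at least $\trans(H)\cdot w(H)/3 - \eps w(H) = (\trans(H)/3 - \eps)\,w(H)$, matching the claimed bound.

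I do not expect a genuine obstacle, since both halves reduce to unwinding definitions and citing \Clm{clean}. The only subtlety worth stating explicitly is the one noted above about Jaccard similarities shifting as edges are deleted; one should confirm that the bound in \Clm{clean}, which tracks the similarity of each edge \emph{at the moment it is removed}, is exactly what is needed and is not undermined by these dynamic changes. The isolated-vertex removal at the end is triangle-neutral (isolated vertices lie in no triangle), so it does not affect the count and only serves to satisfy the ``no isolated vertices'' requirement of \Def{jaccard}.
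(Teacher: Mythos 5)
Your proposal is correct and matches the paper's argument: the paper derives the corollary exactly as you do, by noting the termination condition of $\clean_\eps$ gives everywhere $\eps$-triangle density and substituting $t(H) = \trans(H)\cdot w(H)/3$ into the bound of \Clm{clean}. Your extra remark about Jaccard similarities changing dynamically is already handled inside \Clm{clean} (which measures each edge's similarity at the moment of removal), so no additional work is needed.
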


We also state a simple lemma on the properties of everywhere triangle-dense graphs.
\medskip
\begin{lemma}\label{lem:balance}
If $H$ is everywhere $\eps$-triangle dense, then $d_i \ge \eps\,d_j$ for every edge $(i,j)$.
Furthermore, $N(i)$ is $\eps$-edge dense for every vertex $i$.
\end{lemma}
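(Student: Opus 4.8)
The plan is to unpack the definition of everywhere $\eps$-triangle density and track carefully which of the endpoints $i,j$ lie in the neighborhoods $N(i)$ and $N(j)$; both assertions should then follow from elementary manipulations of the Jaccard ratio. The one fact I would lean on throughout is that a vertex is never its own neighbor, so $i \notin N(i)$ and $j \notin N(j)$, while $j \in N(i)$ and $i \in N(j)$ because $(i,j)$ is an edge.

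For the first claim, I would fix an edge $(i,j)$ and write its Jaccard similarity as $J_{(i,j)} = |N(i)\cap N(j)| / |N(i)\cup N(j)\setminus\{i,j\}| \ge \eps$. The common neighbors $N(i)\cap N(j)$ form a subset of $N(i)$ that excludes $j$ (since $j\notin N(j)$), so the numerator is at most $d_i-1$. For the denominator, the set $N(j)\setminus\{i\}$ sits inside $N(i)\cup N(j)\setminus\{i,j\}$ and has size exactly $d_j-1$, so the denominator is at least $d_j-1$. Combining gives $\eps \le (d_i-1)/(d_j-1)$, hence $d_i - 1 \ge \eps(d_j-1)$, i.e.\ $d_i \ge \eps d_j - \eps + 1$. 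Since $\eps \le 1$ we have $1 - \eps \ge 0$, and therefore $d_i \ge \eps d_j$.

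For the second claim, I would count the edges of $G|_{N(i)}$ by summing over their endpoints. For each $j\in N(i)$, the neighbors of $j$ that also lie in $N(i)$ are precisely the common neighbors $N(i)\cap N(j)$, so $\sum_{j\in N(i)} |N(i)\cap N(j)| = 2\,|E(G|_{N(i)})|$, each internal edge being counted from both ends. Applying the Jaccard bound to the edge $(i,j)$ and using that the denominator contains $N(i)\setminus\{j\}$ of size $d_i-1$, I would obtain $|N(i)\cap N(j)| \ge \eps(d_i-1)$ for every $j\in N(i)$. Summing over the $d_i$ choices of $j$ yields $2\,|E(G|_{N(i)})| \ge \eps\, d_i(d_i-1)$, so $|E(G|_{N(i)})| \ge \eps\binom{d_i}{2}$ and the edge density of $N(i)$ is at least $\eps$. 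I would also note that no vertex has degree $1$ in an everywhere $\eps$-triangle dense graph --- such a vertex would force a Jaccard ratio of $0 < \eps$ on its unique edge --- so $\binom{d_i}{2} > 0$ and the density is well defined.

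The calculations are routine; the only place that demands care is the bookkeeping of whether $i$ and $j$ belong to the various neighborhoods, since an off-by-one slip in either the numerator or the denominator would weaken the first bound to $d_i \ge \eps(d_j-1)$ and lose the clean statement. Discarding $j$ from the numerator (rather than bounding it crudely by $d_i$) is exactly what supplies the extra $1-\eps \ge 0$ of slack needed to land on $d_i \ge \eps d_j$ on the nose.
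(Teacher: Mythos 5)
Your proof is correct and follows essentially the same route as the paper's: bound the Jaccard numerator by $d_i-1$ and the denominator from below by $d_j-1$ for the first claim, then sum $|N(i)\cap N(j)| \ge \eps(d_i-1)$ over $j \in N(i)$ for the edge-density claim. The only (cosmetic) difference is in closing the first claim: the paper splits into cases and uses $(d_i-1)/(d_j-1) \le d_i/d_j$ when $d_i < d_j$, whereas you avoid the case split by observing $d_i \ge \eps d_j + (1-\eps) \ge \eps d_j$ --- both are fine.
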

\begin{proof}
If $d_i \ge d_j$ we are done.
Otherwise
\[ \eps \le J_{(i,j)} = \frac{|N(i) \cap N(j)|}{|(N(i)
  \setminus\{j\}) \cup (N(j)\setminus\{i\})|} \le \frac{d_i-1}{d_j-1}
\le \frac{d_i}{d_j}, \]
as desired.
To prove the second statement,
let $S = N(i)$.
The number of edges in $S$ is at least
\[ \frac{1}{2} \sum_{j\in S} |N(i) \cap N(j)| \ge \frac{1}{2} \sum_{j\in S} J_{(i,j)} (d_i -1) \ge \frac{\eps d_i(d_i - 1)}{2} = \eps \binom{d_i}{2}. \]
\end{proof}

\subsection{Finding a single cluster} \label{sec:cluster}

Suppose we have an everywhere $\eps$-triangle dense graph $H$. We show how to remove a single cluster
of a tightly-knit family. Since the entire focus of this subsection is on $H$, we drop
the $\ldots(H)$ notation.

For a set $S$ of vertices and $\rho \in (0,1]$, we say that $S$ is
  {\em $\rho$-extractable}
if: $H|_S$ is $\rho$-edge dense, $\rho$-triangle dense, $H|_S$ has radius $2$, and $t^{(I)}_S \geq \rho t_S$.
We define a procedure that finds a single
extractable cluster in the graph $H$.

\textbf{The extraction procedure:} Let $i$ be a vertex of maximum degree. For every vertex $j$, let $\theta_j$ be the number of triangles incident on $j$ whose other two vertices
are in $N(i)$. Let $R$ be the set of $d_i$ vertices with the largest $\theta_j$ values. Output $S = \{i\} \cup N(i) \cup R$.

It is not necessary to start with a vertex of maximum degree, but doing so provides a better dependence on $\eps$.
(\emph{Note:} Strictly speaking, the $\{i\}$ above is redundant; a simple
argument shows that $i\in R$.)

We start with a simple technical lemma.
\medskip
\begin{lemma}\label{lem:squares} Suppose $x_1 \ge x_2 \ge \cdots > 0$ with $\sum x_j \le \ga$ and $\sum x_j^2 \ge \gb$.
For all indices $r \le 2\ga^2/\gb$, $\sum_{j\le r} x_j^2 \ge \gb^2r/4\ga^2$.
\end{lemma}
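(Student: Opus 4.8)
The plan is to isolate the head $B_r := \sum_{j \le r} x_j^2$ that we want to lower bound, write $\beta \le \sum_j x_j^2 = B_r + \sum_{j>r} x_j^2$, and control the tail $\sum_{j>r} x_j^2$ well enough to turn this into a self-referential inequality for $B_r$. The monotonicity $x_1 \ge x_2 \ge \cdots$ is what makes the tail manageable: every tail term is dominated by $x_r$, and $x_r$ is in turn controlled by the head.

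First I would record two consequences of monotonicity. Since the first $r$ terms are each at least $x_r$, we get $r x_r^2 \le B_r$, hence $x_r \le \sqrt{B_r/r}$. Since each tail term satisfies $x_j \le x_r$ for $j > r$, we get $\sum_{j>r} x_j^2 \le x_r \sum_{j>r} x_j \le \alpha x_r$, using $\sum_{j>r} x_j \le \sum_j x_j \le \alpha$. Combining these with $\sum_j x_j^2 \ge \beta$ yields the single inequality $\beta \le B_r + \alpha x_r \le B_r + \alpha \sqrt{B_r/r}$.

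Next I would solve this as a quadratic inequality. Writing $t = \sqrt{B_r/r}$ turns it into $r t^2 + \alpha t - \beta \ge 0$, which forces $t$ above the positive root, so $B_r = r t^2 \ge (\sqrt{\alpha^2 + 4 r \beta} - \alpha)^2 / (4r)$. Rationalizing via $(\sqrt{\alpha^2+4r\beta}-\alpha)(\sqrt{\alpha^2+4r\beta}+\alpha) = 4 r \beta$ gives the cleaner form
\[ B_r \ge \frac{4 r \beta^2}{\left(\sqrt{\alpha^2 + 4 r \beta} + \alpha\right)^2}. \]
Finally, the hypothesis $r \le 2\alpha^2/\beta$ gives $\alpha^2 + 4 r \beta \le 9\alpha^2$, so $\sqrt{\alpha^2 + 4 r \beta} + \alpha \le 4\alpha$, the denominator is at most $16\alpha^2$, and we obtain $B_r \ge \beta^2 r / (4\alpha^2)$, as claimed.

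The step I expect to be the real obstacle is getting a tail bound that stays useful for \emph{small} $r$. The naive route $x_j \le \alpha/j$ gives $\sum_{j>r} x_j^2 \le \alpha^2/r$ and hence only $B_r \ge \beta - \alpha^2/r$, which is vacuous once $r \le \alpha^2/\beta$ --- exactly the regime the lemma cares about. Bounding the tail by $\alpha x_r$ instead, and then feeding $x_r \le \sqrt{B_r/r}$ back in, is what keeps the estimate alive across the whole range $r \le 2\alpha^2/\beta$; indeed the upper bound on $r$ enters only at the very end, precisely to keep the discriminant factor $\sqrt{\alpha^2 + 4r\beta}$ below $3\alpha$.
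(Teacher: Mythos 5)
Your proof is correct, and it closes the argument by a genuinely different mechanism than the paper does. Writing $B_r = \sum_{j\le r} x_j^2$ for the head, both proofs rest on the same two consequences of monotonicity --- the head satisfies $B_r \ge r x_r^2$, and the tail satisfies $\sum_{j>r} x_j^2 \le x_{r+1}\sum_j x_j \le \alpha x_{r+1}$ --- but the paper combines them via a two-case split at the threshold $x_{r+1} = \beta/(2\alpha)$: if $x_{r+1} \ge \beta/(2\alpha)$, then already $B_r \ge r x_{r+1}^2 \ge \beta^2 r/(4\alpha^2)$; otherwise the tail is at most $\beta/2$, so $B_r \ge \beta - \beta/2 = \beta/2 \ge \beta^2 r/(4\alpha^2)$ by the hypothesis on $r$. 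You avoid choosing any threshold: feeding $x_r \le \sqrt{B_r/r}$ back into the tail bound yields the self-referential inequality $\beta \le B_r + \alpha\sqrt{B_r/r}$, and solving the quadratic gives the uniform estimate $B_r \ge 4r\beta^2/\bigl(\sqrt{\alpha^2+4r\beta}+\alpha\bigr)^2$, valid for \emph{every} $r$, with the hypothesis $r \le 2\alpha^2/\beta$ entering only at the last step to bound the denominator by $16\alpha^2$. The trade-off: the paper's case analysis is shorter and needs essentially no algebra, but its threshold $\beta/(2\alpha)$ appears out of thin air --- it is exactly the crossover point that your quadratic makes explicit; your route is more systematic and slightly more informative, since the intermediate bound degrades gracefully outside the stated range of $r$ (it tends to $\beta$ as $r \to \infty$) rather than being conditional on $r \le 2\alpha^2/\beta$. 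Either argument is a complete proof of the lemma, and both suffice for the way it is used in Theorem~\ref{thm:extract}.
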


\begin{proof}
If $x_{r+1} \ge \gb/2\ga$, then $\sum_{j\le r} x_j^2 \ge \gb^2r/4\ga^2$ as desired.
Otherwise,
\[ \sum_{j> r} x_j^2 \le x_{r+1}\sum_j x_j \leq \gb/2.\]
Hence, $\sum_{j \le r} x_j^2 = \sum x_j^2 - \sum_{j > r} x_j^2 \ge \gb/2 \ge \gb^2r/4\ga^2$, using the bound given for $r$.
\end{proof}
The main theorem of the section follows.
\medskip
\begin{theorem}\label{thm:extract} Let $H$ be an everywhere $\eps$-triangle dense graph.
The extraction procedure outputs an $\Omega(\eps^4)$-extractable set $S$ of vertices.
Furthermore, the number of edges in $H|_S$ is an $\Omega(\eps)$-fraction of the edges incident to $S$.
\end{theorem}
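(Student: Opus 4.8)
The plan is to let $d=d_i$ be the maximum degree, so that $|S|\le 2d+1=O(d)$ and every vertex of $H$ has degree at most $d$, and then dispatch the three ``geometric'' requirements of extractability directly. For the radius bound, every vertex of $N(i)\cup\{i\}$ is within distance $1$ of $i$, and any $r\in R$ with $\theta_r>0$ is adjacent to an endpoint of some edge of $N(i)$ and hence within distance $2$; the at most a few vertices of $R$ with $\theta_r=0$ are isolated in $H|_S$ and carry no triangles, so I would simply prune them (and note that, since \Lem{balance} already gives at least $\eps\binom{d}{2}$ edges inside $N(i)$, there is no shortage of positive-$\theta$ vertices). For edge density, \Lem{balance} shows $N(i)$ carries at least $\eps\binom{d}{2}$ edges while $\binom{|S|}{2}=O(d^2)$, giving density $\Omega(\eps)$; the same two estimates also settle the ``furthermore'' clause, since the number of edges incident to $S$ is at most $\sum_{v\in S}d_v\le |S|\,d=O(d^2)$, so the internal edges are an $\Omega(\eps)$ fraction.

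The next step is to reduce the triangle-density requirement to the single retention inequality $t^{(I)}_S\ge\rho'\,t_S$. First I would record a local-density fact: for any edge $(u,v)$, $J_{(u,v)}\ge\eps$ forces $|N(u)\cap N(v)|\ge\eps(d_v-1)$, and summing over the $d_v$ neighbors $u$ of $v$ (each triangle through $v$ counted twice) yields $t_v\ge\eps\binom{d_v}{2}$ for \emph{every} vertex. Consequently the induced subgraph satisfies $w(H|_S)=\sum_{v\in S}\binom{d_v(H|_S)}{2}\le\sum_{v\in S}\binom{d_v}{2}\le \tfrac1\eps\sum_{v\in S}t_v\le \tfrac3\eps\,t_S$, because $\sum_{v\in S}t_v$ counts each triangle with its number of vertices in $S$, at most three. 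Hence $\trans(H|_S)=3t^{(I)}_S/w(H|_S)\ge \eps\,t^{(I)}_S/t_S$, so a retention bound with $\rho'=\Omega(\eps^3)$ delivers triangle density $\Omega(\eps^4)$, and the minimum exponent over all four conditions is $\eps^4$. Everything therefore collapses to proving $t^{(I)}_S\ge\Omega(\eps^3)\,t_S$.

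For that retention bound, the plan is a charging argument that classifies every triangle counted in $t_S$ by its intersection with the core $A=N(i)\cup\{i\}$. Triangles through $i$ are automatically internal. The $\theta_j$ values govern the triangles that close an edge of $N(i)$ at a third vertex: those with third vertex in $R\setminus N(i)$ are internal and number $\sum_{j\in R\setminus N(i)}\theta_j$, whereas those with an external third vertex are lost and number $\sum_{j\notin S}\theta_j$. Since $R$ is exactly the $d$ largest $\theta_j$, I would apply \Lem{squares} to the sorted sequence $(\theta_j)$ with $\ga=\sum_j\theta_j$ and $\gb=\sum_j\theta_j^2$ at $r=d$ to convert the greedy top-$d$ guarantee into a $\mathrm{poly}(\eps)$ lower bound on the captured $\theta$-mass relative to the lost mass. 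The remaining lost triangles — those with only one foot in $N(i)$, and those touching only $R$ — I would control using everywhere-density, which bounds each $A$-vertex's external neighborhood by $|N(v)\setminus A|\le(1-\eps)d_v$, together with the maximum-degree choice of $i$.

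The hard part will be precisely this last control of the straddling triangles. A single high-degree vertex of $S$ could a priori host $\Theta(d^2)$ triangles leaving $S$, which would dwarf the $\Theta(\eps d^2)$ triangles through $i$, so no per-vertex bound can work; the argument must be global, showing that whenever such external triangles are abundant the greedy $\theta$-selection simultaneously forces $t^{(I)}_S$ to grow in proportion. Extracting the correct $\mathrm{poly}(\eps)$ trade-off from this tension — a bound that does \emph{not} degrade with $d$ — is exactly what \Lem{squares} is engineered to supply, and the choice of $i$ as a maximum-degree vertex is what pins the final exponent at $\eps^4$ rather than something worse.
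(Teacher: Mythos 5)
The perimeter of your argument is fine and matches the paper: the radius bound, the $\Omega(\eps)$ edge density via \Lem{balance}, and the ``furthermore'' clause via $|S|\le 2d_i$ and maximality of $d_i$ are exactly the paper's steps, and your observation that $t_v\ge\eps\binom{d_v}{2}$ for every vertex is a correct consequence of everywhere-density. The gap is that the entire mathematical content of the theorem --- a cubic-in-$d_i$ lower bound on the captured triangles --- is deferred to a plan that does not work as stated. Your instantiation of \Lem{squares} on the sorted sequence $(\theta_j)$ with $\alpha=\sum_j\theta_j$ and $\beta=\sum_j\theta_j^2$ lower-bounds $\sum_{j\in R}\theta_j^2$, which is not the quantity of interest; what must be bounded is $\sum_{j\in R}\theta_j$. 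The paper's crux is to run the lemma on $x_j=\sqrt{2\theta_j}$, so that ``the sum of the squares of the top $r$ terms'' literally equals twice the captured triangle mass, and then to supply the two estimates that make the lemma bite (\Clm{xj}): $\sum_j x_j\le\sum_{k\in N}d_k$, from $\theta_j\le\binom{\eta_j}{2}$, and $\sum_j x_j^2\ge\frac{\eps}{2}\sum_{k\in N}d_k(H|_N)\,d_k$, from the pointwise bound $t_e\ge\eps(d_{k_1}+d_{k_2})/4$ on edges of $N(i)$ (this is where everywhere-density enters), together with the hypothesis check $d_i\le 2\alpha^2/\beta$ via $\eps d_i\le d_k\le d_i$. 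None of these appear in your proposal, and without them \Lem{squares} cannot produce the needed $\sum_{j\in R}\theta_j=\Omega(\eps^4 d_i^3)$.

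Second, the ``hard part'' you anticipate --- a global argument controlling straddling triangles --- is a phantom, and your reduction actually makes the target harder than it needs to be. Since $i$ has maximum degree, every vertex of $S$ has degree at most $d_i$, so both $t_S$ and $w(H|_S)$ are at most $|S|\binom{d_i}{2}\le d_i^3$ \emph{trivially}; no further control of triangles leaving $S$ is required. Once $t^{(I)}_S\ge\frac{1}{3}\sum_{j\in R}\theta_j=\Omega(\eps^4 d_i^3)$ is in hand, both the retention bound and the triangle density follow at $\Omega(\eps^4)$ directly. By contrast, your route to density through $w(H|_S)\le\frac{3}{\eps}t_S$ is lossy by a factor of $\eps$ against this trivial wedge bound: it forces you to prove retention $\Omega(\eps^3)$, a statement stronger than the $\Omega(\eps^4)$ retention this method (or the paper) actually delivers, and one you never establish --- it is not even clear it is true. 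With the achievable $\Omega(\eps^4)$ retention, your reduction yields only $\Omega(\eps^5)$ triangle density, falling short of the claim.
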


\begin{proof}
Let $\eps > 0$, $i$ a vertex of maximum degree, and $N = N(i)$.

We have $|S| \leq 2d_i$. By \Lem{balance}, $H|_N$ has at least $\eps
\binom{d_i}{2}$ edges,
so $H|_S$ is $\Omega(\eps)$-edge dense. By the size of $S$ and
maximality of $d_i$,
the number of edges in $H|_S$ is an $\Omega(\eps)$-fraction of the edges
incident to $S$.
It is also easy to see that $H|_S$ has radius 2.
It remains to show that $H|_S$ is $\Omega(\eps^4)$-triangle dense, and that $t_S^{(I)} = \Omega(\eps^4)t_S$.

For any $j$, let $\eta_j$ be the number of edges from $j$ to $N$, and let $\theta_j$ be the number of triangles incident on $j$ whose other two vertices are in $N$.
Let $x_j = \sqrt{2\theta_j}$.

\Lem{squares} tells us that if we can (appropriately) upper bound $\sum_j x_j$
and lower bound $\sum_j x_j^2$, then the sum of the largest few $x^2_j$'s
is significant. This implies that $H|_S$ has sufficiently many
triangles.
Using appropriate parameters, we show that
$H|_S$ contains $\Omega(\poly(\epsilon) \cdot d^3_i)$ triangles,
as opposed to trivial bounds that are quadratic in $d_i$.

%
%

\begin{claim} \label{clm:xj} We have $\sum_j x_j \leq \sum_{k \in N} d_k$, and
$\sum_{j} x_j^2 \geq \frac{\eps}{2} \sum_{k\in N} d_k(H|_N)\; d_k$, where $d_k(H|_N)$ is the degree of vertex $k$ within $H|_N$.
\end{claim}

\begin{proof}
We first upper bound $\sum_j x_j$:
\[ \sum_{j} x_j \le \sum_{j} \sqrt{2 \binom{\eta_j}{2}} \leq \sum_j
\eta_j = \sum_{k \in N} d_k. \]
The first inequality follows from $\theta_j \le
\binom{\eta_j}{2}$. The last equality is simply stating that the total number of edges to vertices in $N$ is the same as the total number of edges from vertices in $N$.


Let $t_e$ be the number of triangles that include the edge $e$.
For every $e = (k_1,k_2)$, $t_e \ge J_e\cdot \max(d_{k_1}-1, d_{k_2}-1)
\ge \eps\cdot \max(d_{k_1}-1, d_{k_2}-1)$. Since $\eps > 0$, each
vertex is incident on at least 1 triangle. Hence all degrees are at
least $2$, and $d_k-1 \ge d_k/2$ for all $k$. This means
\[ t_e \ge \frac{\eps\cdot\max(d_{k_1}, d_{k_2})}{2} \ge \frac{\eps(d_{k_1} + d_{k_2})}{4} \qquad \text{for all $e=(k_1,k_2)$}.\]

We can now lower bound $\sum_{j} x^2_j$. Abusing notation, $e \in H|_N$ refers to an edge in the induced subgraph. We have
\[ \sum_{j} x_j^2 = \sum_{j} 2\theta_j = \sum_{e \in H|_N} 2t_e \ge \sum_{(k_1,k_2) \in H|_N} \frac{\eps}{2} (d_{k_1} + d_{k_2}) = \frac{\eps}{2} \sum_{k\in N} d_k(H|_N)\; d_k. \]
The two sides of the second equality are counting (twice) the number of triangles ``to'' and ``from'' the edges of $N$.
%
\end{proof}

We now use \Lem{squares} with $\ga = \sum_{k\in N} d_k$, $\gb = \frac{\eps}{2} \sum_{k\in N} d_k(H|_N)\, d_k$, and $r = d_i$. We first check that $r \le 2\ga^2/\gb$. Note that $d_i \ge d_k \ge \eps d_i$ for all $k\in N$, by \Lem{balance} and by the maximality of $d_i$. Hence,
\[ \frac{2\ga^2}{\gb}
= \frac{4}{\eps}\,\frac{\left(\sum_{k\in N} d_k\right)^2}{\sum_{k\in N} d_k(H|_N)\,d_k}
\ge \frac{4}{\eps}\,\frac{\eps d_i|N|\sum_{k\in N} d_k}{d_i\sum_{k\in N} d_k}\ge 4d_i \ge r,\]
as desired. Let $R$ be the set of $r=d_i$ vertices with the highest
value of $\theta_j$, or equivalently, with the highest value of $x_j^2$. By
\Lem{squares}, $\sum_{j\in R} x_j^2 \ge \gb^2r/4\ga^2$, or $\sum_{j\in
  R} \theta_j \ge \gb^2r/8\ga^2$.
We compute
\[ \frac{\gb}{\ga}
= \frac{\eps}{2} \frac{\sum_{k\in N} d_k(H|_N)\; d_k}{\sum_{k\in N} d_k}
\ge \frac{\eps}{2} \min_{k\in N} d_k(H|_N)
\ge \frac{\eps^2d_i}{4}, \]
which gives $\sum_{j\in R} \theta_j \ge \eps^4d_i^3/128$. For the
first inequality above, think of the $d_k/\sum d_k$ as
the coefficients in a convex combination of
$d_k(H|_N)$'s. For the last inequality, $d_k(H|_N) = t_{(i,k)} \ge
J_{(i,k)} (d_i-1) \ge \eps d_i/2$ for all $k\in N$.

Recall $S = N \cup R$ and $|S| \le 2d_i$. We have
\[ t^{(I)}_S \ge \frac{\sum_{j \in R} \theta_j}{3} \ge \frac{\eps^4d^3_i}{384}, \]
since triangles contained in $N$ get overcounted by a factor of
3. Since both $t_S$ and the number of wedges in $S$ are bounded above
by $|S|\binom{d_i}{2} = \Theta(d_i^3)$, $H|_S$ is
$\Omega(\eps^4)$-triangle dense, and $t_S^{(I)} = \Omega(\eps^4)t_S$,
as desired.
\end{proof}

\subsection{Getting the entire family in a triangle-dense graph}
\label{sec:family}

We start with a
triangle-dense graph $G$ and explain how to get the desired entire
tightly-knit family.
Our procedure --- called the decomposition procedure --- takes as input a
parameter $\eps$.

\textbf{The decomposition procedure:} Clean the graph with
$\clean_\eps$, and run the extraction procedure to get a set
$S_1$. Remove $S_1$ from the graph, run $\clean_\eps$ again, and
extract another set $S_2$. Repeat until the graph is empty. Output the
sets $S_1, S_2, \ldots$.

We now prove our main theorem, \Res{main}, restated for convenience.

\medskip
\begin{theorem}\label{thm:cftriangle}
Consider a $\trans$-triangle dense graph $G$ and $\eps \leq \trans/4$.
The decomposition procedure outputs an $\Omega(\eps^4)$ tightly-knit
family with an $\Omega(\eps^4)$-fraction of the triangles of $G$.
\end{theorem}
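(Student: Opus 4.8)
The plan is to track the triangles of $G$ through the rounds of the decomposition procedure, showing that only a small fraction are lost to cleaning while extraction captures a constant fraction of everything it touches. First I would fix notation: let $G_1 = G$, let $H_i = \clean_\eps(G_i)$ be the cleaned graph in round $i$, let $S_i$ be the set extracted from $H_i$, and let $G_{i+1}$ be $H_i$ with the vertices of $S_i$ deleted, running until the graph is empty. Write $L_i = t(G_i) - t(H_i)$ for the triangles lost to cleaning in round $i$, and let $t_{S_i}$ be the number of triangles of $H_i$ incident to $S_i$ (exactly those destroyed when $S_i$ is removed). Deleting $S_i$ from $H_i$ removes precisely the triangles incident to $S_i$, so $t(G_{i+1}) = t(H_i) - t_{S_i} = t(G_i) - L_i - t_{S_i}$, and telescoping down to the empty graph gives the accounting identity $t(G) = \sum_i (L_i + t_{S_i})$.

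Next I would verify the tightly-knit property of each cluster, which transfers from $H_i$ to $G$ almost for free. Each $H_i$ is everywhere $\eps$-triangle dense by \Cor{clean}, so \Thm{extract} applies and makes $H_i|_{S_i}$ $\Omega(\eps^4)$-edge dense, $\Omega(\eps^4)$-triangle dense, of radius $2$, with $t^{(I)}_{S_i} \ge \Omega(\eps^4)\, t_{S_i}$. Since $H_i$ is a subgraph of $G$ (only edges and vertices are ever deleted), $G|_{S_i}$ has at least as many edges and triangles as $H_i|_{S_i}$ on the same vertex set, so its edge density is still $\Omega(\eps^4)$ and, as adding edges only shrinks distances, its radius is still at most $2$. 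The one point needing a short argument is the triangle density of $G|_{S_i}$: here I would use that $|S_i| \le 2 d_i$ (with $d_i$ the maximum degree of $H_i$), so the number of wedges of $G|_{S_i}$ is at most $|S_i|\binom{|S_i|-1}{2} = O(d_i^3)$ by a purely combinatorial count, while $t(G|_{S_i}) \ge t^{(I)}_{S_i} = \Omega(\eps^4 d_i^3)$; the ratio is $\Omega(\eps^4)$. This shows the output is an $\Omega(\eps^4)$-tightly-knit family, and the captured triangles $\sum_i t(G|_{S_i}) \ge \sum_i t^{(I)}_{S_i} \ge \Omega(\eps^4) \sum_i t_{S_i}$ are genuinely distinct triangles of $G$ because the $S_i$ are disjoint.

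The crux, and the step I expect to be the main obstacle, is bounding the total cleaning loss $\sum_i L_i$. The naive bound $L_i \le \eps\, w(G_i)$ from \Clm{clean} is useless on its own, since $\sum_i w(G_i)$ could be many times $w(G)$. The key observation is that wedges are destroyed monotonically: a wedge of $G$ survives into $G_i$ only while both of its edges survive, and it dies exactly once over the whole run --- either by an edge removal during some cleaning or by a vertex removal during some extraction. Hence, if $w_i^{c}$ denotes the wedges killed by cleaning in round $i$, these sets (together with the wedges killed by extractions) are disjoint, so $\sum_i w_i^{c} \le w(G)$. The proof of \Clm{clean} in fact bounds the cleaning loss by $\eps$ times the wedges removed, giving $L_i \le \eps\, w_i^{c}$ and therefore $\sum_i L_i \le \eps\, w(G)$. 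Finally I would invoke triangle density: $w(G) = 3t(G)/\trans$, and since $\eps \le \trans/4$ this yields $\sum_i L_i \le 3\eps\, t(G)/\trans \le \tfrac34 t(G)$. Combining with the identity from the first paragraph, $\sum_i t_{S_i} = t(G) - \sum_i L_i \ge \tfrac14 t(G)$, so the captured triangles number at least $\Omega(\eps^4)\sum_i t_{S_i} = \Omega(\eps^4)\, t(G)$, completing the proof.
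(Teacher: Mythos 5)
Your proposal is correct and follows essentially the same approach as the paper's proof: invoke \Thm{extract} for the per-cluster properties, and bound the total cleaning loss by observing that the wedges destroyed across \emph{all} cleaning phases are disjoint, so the loss is at most $\eps\, w(G) = 3\eps\, t(G)/\trans \le \tfrac34 t(G)$. If anything, your write-up is slightly more careful than the paper's, both in the telescoping accounting and in explicitly verifying that the density and radius guarantees transfer from $H_i|_{S_i}$ to the induced subgraph $G|_{S_i}$ (via the combinatorial wedge bound $O(d_i^3)$), a step the paper glosses over.
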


\begin{proof}  We are guaranteed by \Thm{extract}
that $G|_{S_i}$ is $\Omega(\eps^4)$-edge and $\Omega(\eps^4)$-triangle
dense and has radius $2$.
It suffices to prove that an $\Omega(\eps^4)$-fraction of the triangles in $G$
are contained in this family.

Consider the triangles that are \emph{not} present in the tightly-knit family.
We call these the destroyed triangles.
Such triangles fall into two categories: those destroyed in the
cleaning phases, and those
destroyed when an extractable set is removed. Let $C$ be the triangles
destroyed during cleaning, and let $D_k$ be the  triangles destroyed in the $k$th extraction.
By the definition of extractable subsets and \Thm{extract}, $t(G|_{S_k}) = \Omega(\eps^4|D_k|)$.
Note that $C, D_k$, and the triangles in $G|_{S_k}$ (over all $k$) partition the
total set of triangles.
Hence, we get that $\sum_k t(G|_{S_k}) = \Omega(\eps^4 (t - |C|))$.

We now bound $|C|$. This follows the proof of \Clm{clean}. Let $e_1, e_2, \ldots$ be all the edges removed during cleaning phases.
Let $W_l$ and $T_l$ be the set of wedges and triangles that are destroyed when $e_l$ is removed.
Since the Jaccard similarity of $e_l$ at the time of removal is at most $\eps$, $|T_l| \leq \eps (|W_l|-|T_l|) \leq \eps |W_l|$.
All the $W_l$s (and $T_l$s) are disjoint.
Hence, $|C| = \sum_l |T_l| \leq \eps \sum_l |W_l| = \eps w = 3\eps
t/\trans \le 3t/4$, and $\sum_k t(G|_{S_k}) = \Omega(\eps^4 t)$, as
desired.
\end{proof}

\subsection{Preserving edges in a mostly everywhere triangle-dense graph}
\label{sec:everywhere}

For a mostly everywhere triangle-dense graph, we can also preserve a
constant fraction of the \emph{edges}.
This requires a more subtle argument. The aim of this subsection is to
prove the following (cf.\ \Res{main-every}).
\medskip
\begin{theorem} \label{thm:edges} Consider a $\mu,\gc$-triangle dense graph $G$, for $\mu \ge 1-\gc^2/32$.
The decomposition procedure, with $\eps \leq \gc^3/12$, outputs an
$\Omega(\eps^4)$ tightly-knit family with an $\Omega(\eps^4)$ fraction
of the triangles of $G$ and an $\Omega(\eps\gamma)$ fraction of the
edges of $G$.
\end{theorem}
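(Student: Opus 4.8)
The plan is to split the edge losses into two sources --- edges discarded by a $\clean_\eps$ phase, and edges incident to an extracted set $S_k$ that do not end up inside $G|_{S_k}$ --- and to dispatch the second source with machinery already available. Write $m_{\mathrm{surv}}$ for the number of edges consumed by extractions, equivalently $m$ minus the number of edges ever deleted during cleaning. By \Thm{extract}, each extraction retains an $\Omega(\eps)$-fraction of the edges incident to $S_k$ in the current graph; summing over $k$ (each surviving edge is counted once, at the step that first extracts an endpoint) shows the family contains $\Omega(\eps)\,m_{\mathrm{surv}}$ edges. So the edge statement reduces to showing that an $\Omega(\gamma)$-fraction of edges survive every cleaning phase, i.e.\ $m_{\mathrm{surv}}\ge\Omega(\gamma)\,m$; I treat the triangle guarantee at the end.

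To track ``surviving edges'' through the cascade of cleanings, I would weight each incidence of a vertex $v$ in a triangle by $1/(d_v-1)$, setting $\omega(\Delta)=\sum_{v\in\Delta}1/(d_v-1)$, so the total triangle-weight of a graph equals $\sum_v e(N(v))/(d_v-1)$, where $e(N(v))$ is the number of edges inside $N(v)$. Since $e(N(v))\le\binom{d_v}{2}$, this is at most $\sum_v\tfrac12 d_v=m$. In the reverse direction the good edges force it to be large: by \Eqn{jaccard} each good edge $(v,k)$ has $t_{(v,k)}\ge\tfrac{\gamma}{2}(d_v-1)$, so $e(N(v))\ge\tfrac{\gamma}{4}(d_v-1)\,d_v^{\mathrm{good}}$ and the total weight is at least $\tfrac{\gamma}{2}|E_{\mathrm{good}}|\ge\Omega(\gamma)\,m$, using $\mu\ge1-\gamma^2/32$. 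Because deleting edges only shrinks degrees, each surviving triangle's weight only grows, so applying the $\le m$ bound to the graph $G'$ obtained by removing exactly the cleaning-deleted edges gives $m_{\mathrm{surv}}\ge\mathrm{TW}(G)-\mathrm{TW}_{\mathrm{del}}$, where $\mathrm{TW}_{\mathrm{del}}$ is the weight of triangles destroyed in cleaning. Hence it suffices to show $\mathrm{TW}_{\mathrm{del}}\le\tfrac12\,\mathrm{TW}(G)$, and the advertised factor $\gamma$ is precisely the one lost in the bound $\mathrm{TW}(G)\ge\Omega(\gamma)m$.

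Bounding $\mathrm{TW}_{\mathrm{del}}$ is where $\eps\le\gamma^3/12$ and $\mu\ge1-\gamma^2/32$ enter, via two coupled ideas. First, a \emph{contracting cascade}: an originally-good edge can be cleaned only after its Jaccard similarity falls from $\ge\gamma$ to $<\eps$, which (degrees only shrink) forces at least a $1-2\eps/\gamma\ge1-\gamma^2/6>5/6$ fraction of its triangles to have been destroyed \emph{before} its removal, whereas on removal it destroys only $t_e<\eps(d_i+d_j-2)$ further triangles. Charging each destroyed triangle to the at most two of its edges it helps push below threshold, the triangle-mass destroyed \emph{via} good edges is an $O(\eps/\gamma)=O(\gamma^2)$-fraction of the total destroyed, so up to a factor $2$ all destruction is attributable to triangles killed directly by originally-\emph{bad} edges. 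Second, since there are at most $\tfrac{\gamma^2}{32}m$ bad edges and every good edge has balanced endpoints (\Lem{balance}), the bad edges are incident to only a small fraction of the (weighted) triangle-mass; combining the two yields $\mathrm{TW}_{\mathrm{del}}\le\tfrac12\,\mathrm{TW}(G)$. The same argument at the level of \emph{counts} shows cleaning destroys at most a constant fraction of the triangles, which --- fed into \Thm{extract} exactly as in the proof of \Thm{cftriangle} but with this finer estimate replacing $|C|\le\eps w$ --- gives the $\Omega(\eps^4)$-fraction of triangles; note \Thm{cftriangle} cannot be invoked directly, since a mostly-everywhere dense graph need not be globally triangle dense.

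I expect the main obstacle to be the \emph{weighted} form of the cascade. The count-level statement ``a dying good edge has already lost most of its triangles'' is clean, but in the weighted accounting the triangles a dying edge $(i,j)$ destroys may have a low-degree third vertex $k$, whose weight $1/(d_k-1)$ is large and is not obviously dominated by the weight the edge consumed; controlling this appears to require either also charging against wedge-weights (as the overview hints) or arguing that such a low-degree $k$ makes $(i,k)$ or $(j,k)$ so imbalanced that it is itself already slated for removal. The second delicate point is making precise that few bad edges implies little incident triangle-weight --- false if one merely counts bad edges, but true once one exploits that a large triangle-count on a low-Jaccard edge forces many \emph{other} low-Jaccard edges; this is exactly where the constant in $\mu\ge1-\gamma^2/32$ must be calibrated.
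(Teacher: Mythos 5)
Your scaffolding matches the paper's and the parts you actually prove are correct: the reduction via \Thm{extract} to bounding cleaning losses, the weight upper bound (total triangle weight at most $|E|$; the paper's \Clm{esrts}) and the lower bound from good edges (weight at least $\Omega(\gc)|E|$; the paper's \Clm{rtinit}). But the two ``delicate points'' you defer at the end are not loose ends --- they are the entire technical content of the theorem, and the cascade does not substitute for them. The paper's resolution is to build the good/bad distinction into the weight function itself: a triangle with no good edge gets weight $0$, a triangle whose only good edge is $(i,j)$ gets weight $1/d'_i+1/d'_j$ (the third vertex is never charged), and only a triangle with two good edges gets all three terms, in which case \Clm{wbalance} forces $d'_k \ge \gc^2 d'$ --- this kills exactly the low-degree-third-vertex problem you identify. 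With that weight function, (i) a removed good edge destroys triangle weight at most $(3\eps/\gc^2)$ times the \emph{wedge} weight it destroys (\Clm{greenclean}), which is then summed against the global wedge-weight bound $r(W)\le 2\mu|E|$ (\Clm{rwinit}); so the wedge weights you hoped to avoid are precisely what closes the good-edge case, not a charge against previously destroyed triangles. And (ii) a removed bad edge destroys weight at most $4/\gc$ (\Clm{nongreenclean}), since only its triangles containing a good edge count, and each endpoint contributes at most $g_i\bigl(1/d'_i + 1/(\gc d'_i)\bigr) \le 2/\gc$. With your all-triangles weight, no such per-edge bound exists: a bad edge $(i,j)$ whose common neighbors all have degree $2$ destroys weight roughly $|N(i)\cap N(j)|$ in a single step, so your step (ii) would need a genuinely global argument that the proposal does not supply (your stated hunch about low-Jaccard edges propagating is not developed into one).

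Your handling of the triangle condition is also incomplete. You are right --- and more careful than the paper here --- that \Thm{cftriangle} cannot be invoked verbatim, since a $\mu,\gc$-triangle dense graph need not be globally triangle dense: a clique on $k$ vertices together with a disjoint star with $\binom{k}{2}/31$ leaves is $(31/32,1)$-triangle dense yet has triangle density $O(1/k)$. But your proposed repair, ``the same argument at the level of counts,'' reduces after the cascade to bounding the \emph{number} of triangles destroyed at bad-edge removals, and you concede yourself that counting bad edges cannot control this; the count-level cascade (charging each destroyed triangle to at most two later-removed edges, giving that good-edge removals account for an $O(\eps/\gc)$-fraction of all destruction) is a nice observation, but it only shifts the burden onto the bad edges, where it stalls for the same reason as in the weighted case. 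So as written, neither the $\Omega(\eps\gc)$ edge bound nor the $\Omega(\eps^4)$ triangle bound is established: you have the correct frame and an accurate diagnosis of where the difficulty sits, and the missing content is exactly the paper's \Clm{wbalance}, \Clm{greenclean}, and \Clm{nongreenclean}.
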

The proof appears at the end of the subsection.
The tightly-knit family and triangle conditions follow directly from
\Thm{cftriangle}, so we focus on the edge condition.
By \Thm{extract}, the actual removal
of the clusters preserves a large enough fraction of the edges. The
difficulty is in bounding the edge removals during the cleaning phases.

We first give an informal description of the argument. We would like
to charge lost edges to lost triangles, and piggyback on the fact that
not many triangles are lost during cleaning. More specifically, we
apply a weight function to triangles (and wedges), such that losing or
keeping an edge corresponds to losing or keeping roughly one unit of
triangle (and wedge) weight in the graph. Most edges $(i,j)$ belong to
roughly $d_i+d_j$ triangles and wedges, and so intuitively we weight
each of those triangles (and wedges) by roughly $1/(d_i+d_j)$. This
intuition
breaks down if $d_i\ll d_j$, but $d_i\approx d_j$ for edges with high
Jaccard similarity.

The rest of the argument follows the high-level plan of the $\eps$-triangle
dense case (cf.\ the argument to bound $|C|$ in \Thm{cftriangle}),
though work is needed to replace triangles and wedges with
their weighted counterparts. The original graph $G$ has high triangle
density, which under our weight function is enough to imply a
comparable amount of triangle weight and wedge weight.
Only edges with low Jaccard similarity are removed during cleaning,
and each of these removed edges destroys significantly more wedge weight
than triangle weight. Hence, at the end of the process, a
lot of triangle weight must remain. There is a tight correspondence
between edges and triangle weight, and so a lot of edges must also
remain.

We now start the formal proof.
We use $E$, $W$, and $T$ to denote the sets of edges, wedges, and
triangles in $G$.
$W_e$ and $T_e$ denote the sets of edges and triangles that
include the edge $e$.
We use $E^c$, $W^c$, and $T^c$ to denote the respective sets destroyed
during the cleaning phases, and use $W^c_e$ and $T^c_e$ to denote the
corresponding local versions.
If an edge $e$ is removed during cleaning,
then $W^c_e \subseteq W_e$, but the
sets are not necessarily equal, since elements of $W_e$ may have been
removed prior to $e$ being cleaned.
Let $T^s = T\setminus T^c$.
Let $E^s$ and $V^s$ denote the edges and vertices, respectively,
included in at least one triangle of $T^s$.
For ease of reading, let $d'_i = d_i-1$ be one less than the degree of vertex $i$.

Call an edge $e$ \emph{good} if $J_e \ge \gc$ in the original
graph $G$, and \emph{bad} otherwise. We use $g_i$ to denote the number of good edges incident to vertex $i$.
Call a wedge good if it contains at least one good edge, and bad otherwise.
By hypothesis, a $\mu$ fraction of edges are good.
We make the following observation.

\medskip
\begin{claim} \label{clm:wbalance} For every good edge $(i,j)$, $d'_i
  \ge \gc d'_j$.
\end{claim}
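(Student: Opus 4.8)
The plan is to mirror the proof of the first statement of \Lem{balance}, which established the analogous degree-balance inequality $d_i \ge \eps d_j$ for every edge of an everywhere $\eps$-triangle dense graph. The key point is that a good edge has $J_{(i,j)} \ge \gc$ by definition, and this ratio can be sandwiched between the two endpoints' degrees, so the Jaccard lower bound translates directly into a degree-balance lower bound.

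First I would dispose of the easy case $d_i \ge d_j$: here $d'_i \ge d'_j \ge \gc d'_j$ since $\gc \le 1$, and there is nothing further to do. So assume $d_i < d_j$, and bound the numerator and denominator of $J_{(i,j)}$ separately. Every common neighbor of $i$ and $j$ lies in $N(i)\setminus\{j\}$ (it is a neighbor of $i$ and cannot equal $j$, as $j\notin N(j)$), so the numerator $|N(i)\cap N(j)|$ is at most $d_i - 1 = d'_i$. For the denominator, the set $N(j)\setminus\{i,j\}$ --- which has exactly $d_j - 1 = d'_j$ elements, since $i\in N(j)$ and $j\notin N(j)$ --- is contained in $N(i)\cup N(j)\setminus\{i,j\}$, so the denominator is at least $d'_j$. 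Combining,
\[ \gc \le J_{(i,j)} = \frac{|N(i)\cap N(j)|}{|N(i)\cup N(j)\setminus\{i,j\}|} \le \frac{d'_i}{d'_j}, \]
which rearranges to the claimed $d'_i \ge \gc d'_j$.

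There is no real obstacle here; the argument is a direct transcription of \Lem{balance}, the one difference being that I retain the $-1$ on both sides of the ratio (getting $d'_i/d'_j$) rather than discarding it in favor of the weaker $d_i/d_j$ bound, since the later weighted charging argument compares $d'_i$ and $d'_j$ directly. The only point requiring care is the bookkeeping of which endpoints are excluded from each neighborhood set, so that the numerator is bounded above by exactly $d'_i$ and the denominator below by exactly $d'_j$; an off-by-one error in either place would weaken or invalidate the inequality.
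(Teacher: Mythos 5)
Your proof is correct and takes essentially the same route as the paper: the paper writes $\gc \le J_e = t_e/(d'_i+d'_j-t_e) \le d'_i/d'_j$ using $t_e \le d'_i$, which is exactly your numerator/denominator bookkeeping (your lower bound of $d'_j$ on the denominator is equivalent to the fact $t_e \le d'_i$). Your initial case split on $d_i \ge d_j$ is harmless but unnecessary, since your two bounds hold for any edge.
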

\begin{proof}
We have
\[ \gc \le J_e = \frac{t_e}{d'_i+d'_j-t_e} \le \frac{d'_i}{d'_j}, \]
where the last inequality comes from $t_e \le d'_i$.
\end{proof}

We now define a \emph{weight} function $r$ on triangles and wedges.
For a triangle $\cT = (i_1,i_2,i_3)$ with at least 2 good edges,
define $r(\cT) = 1/d'_{i_1} + 1/d'_{i_2} + 1/d'_{i_3}$.
If $\cT$ has only one good edge $(i_1,i_2)$, then $r(\cT) = 1/d'_{i_1} +
1/d'_{i_2}$. If $\cT$ has no good edges, then $r(\cT) = 0$.
For a good wedge $w$ with central vertex $i$, $r(w) = 1/d'_i$, otherwise $r(w)=0$. Let $r(X) = \sum_{x\in X} r(x)$. Note that weights are always with respect to the degrees in the original graph $G$, and do not change over time.

In the next two claims we show that the total triangle weight in $G$
is comparable to the total wedge weight in $G$, and is also comparable
to $|E|$.

\medskip
\begin{claim} \label{clm:rtinit} $r(T) \ge \gc\mu |E|$. \end{claim}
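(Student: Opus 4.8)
The plan is to account for the total triangle weight one good edge at a time. Since at least a $\mu$ fraction of the $|E|$ edges of $G$ are good, the number of good edges is at least $\mu|E|$, so it suffices to show that each good edge contributes at least $\gc$ to $r(T)$ without any over\-counting across edges. To make this precise I would first record the convenient reformulation of the weight: by inspecting the three cases in the definition of $r(\cT)$, one sees that $r(\cT)=\sum_{m}1/d'_{i_m}$, where the sum is over exactly those vertices $i_m$ of $\cT$ that are an endpoint of some good edge of $\cT$. In other words, every vertex incident to a good edge of $\cT$ gets a full $1/d'$ ``slot.''

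The crux is then the following per-triangle inequality, which I would prove by casework on the number of good edges of $\cT$:
\[
r(\cT)\ \ge\ \sum_{\text{good }e=(a,b)\in\cT}\frac{1}{\max(d'_a,d'_b)}.
\]
When $\cT$ has a single good edge $(a,b)$ this is immediate, since the left side contains the term $1/\max(d'_a,d'_b)$ outright. The delicate case is two (or three) good edges sharing a vertex: here $r(\cT)$ awards all three slots $1/d'_{i_1}+1/d'_{i_2}+1/d'_{i_3}$, and one must check that these three slots dominate the two (or three) charges $1/\max(d'_a,d'_b)$, one per good edge. This reduces to an elementary comparison of $1/d'$ values after ordering the three degrees, and it is the step where I expect the bookkeeping to require the most care; it is the main obstacle in the argument, though a purely routine one.

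Granting the per-triangle bound, the rest assembles cleanly. Swapping the order of summation and using that each edge $e$ lies in exactly $t_e$ triangles gives
\[
r(T)=\sum_{\cT}r(\cT)\ \ge\ \sum_{\cT}\ \sum_{\text{good }e\in\cT}\frac{1}{\max(d'_a,d'_b)}\ =\ \sum_{\text{good }e=(a,b)}\frac{t_e}{\max(d'_a,d'_b)}.
\]
Finally, for a good edge $e=(a,b)$ the denominator of $J_e$, namely $d'_a+d'_b-t_e=|N(a)\cup N(b)\setminus\{a,b\}|$, is at least $\max(d'_a,d'_b)$, so $t_e=J_e(d'_a+d'_b-t_e)\ge J_e\max(d'_a,d'_b)\ge\gc\max(d'_a,d'_b)$. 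Thus every summand is at least $\gc$, and summing over the at least $\mu|E|$ good edges yields $r(T)\ge\gc\mu|E|$, as claimed.
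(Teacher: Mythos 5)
Your proof is correct, and it reorganizes the double counting differently from the paper: the paper's accounting is vertex-centric, yours is edge-centric. The paper starts from the same reformulation you record, writing $r(T)=\sum_i t^g_i/d'_i$ where $t^g_i$ counts triangles containing $i$ via a good edge at $i$, and then bounds $t^g_i \ge g_i\gc d'_i/2$: each good edge $(i,j)$ lies in $t_{(i,j)}\ge \gc d'_i$ triangles, with a factor of $2$ conceded for triangles counted through two good edges at the same vertex; that factor is exactly recovered because $\sum_i g_i$ is twice the number of good edges, hence at least $2\mu|E|$. You instead distribute each triangle's weight over its good edges via the per-triangle inequality $r(\cT)\ge\sum 1/\max(d'_a,d'_b)$ (summed over good edges $(a,b)$ of $\cT$) and then apply the bound $t_e\ge\gc\max(d'_a,d'_b)$ one good edge at a time. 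The step you flag as the main obstacle does go through, and can be done without ordering degrees: assign each good edge of $\cT$ to a distinct endpoint (trivially possible whether $\cT$ has one, two, or three good edges), observe that $1/\max(d'_a,d'_b)\le 1/d'_v$ for either endpoint $v$ of the edge, and note that every vertex incident to a good edge of $\cT$ contributes its full slot $1/d'_v$ to $r(\cT)$. As for what each route buys: the paper's argument is shorter, but hides the multiplicity issue inside the factor $2$ and the handshake identity; yours requires the small per-triangle case analysis but involves no overcounting correction and makes explicit the role of $\max(d'_a,d'_b)$, i.e.\ of the Jaccard denominator, which is the same quantity the paper uses implicitly in \Clm{wbalance}. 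Both arguments yield the identical constant $\gc\mu|E|$.
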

\begin{proof}
Let $t^g_i$ be the number of triangles $(i,j,k) \in T$ for
which at least one of $(i,j),(i,k)$ is good.
Since the good edges each have Jaccard similarity $\ge \gc$, we have
$t^g_i \ge g_i\gc d'_i/2$.  Thus, \[ r(T) = \sum_{i} \frac{t^g_i}{d'_i}
\ge \sum_{i} \frac{g_i\gc}{2} = \gc\mu |E|. \qedhere \]
\end{proof}

\medskip
\begin{claim} \label{clm:rwinit} $r(W) \le 2\mu|E|$. \end{claim}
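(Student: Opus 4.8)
The plan is to partition the wedges by their central vertex and bound each vertex's contribution separately, mirroring the vertex-by-vertex bookkeeping used in \Clm{rtinit}. For a vertex $i$, let $W_i$ be the set of good wedges whose central vertex is $i$. Since every good wedge centered at $i$ carries weight exactly $1/d'_i$ and non-good wedges contribute nothing, we have $r(W) = \sum_i |W_i|/d'_i$. It therefore suffices to control $|W_i|$, the number of good wedges centered at $i$.

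The key combinatorial step is the claim that $|W_i| \le g_i\, d'_i$. A good wedge centered at $i$ is an unordered pair of edges $\{(i,j),(i,k)\}$, at least one of which is good; I would charge each such wedge to one of its good edges $(i,\ell)$. A fixed good edge $(i,\ell)$ forms a wedge with each of the other $d_i - 1 = d'_i$ edges incident to $i$, so the number of (good edge, wedge) incidences at $i$ is at most $g_i\, d'_i$. Since every good wedge is charged at least once, $|W_i| \le g_i\, d'_i$. Wedges with two good edges get over-counted in this scheme, but that only strengthens the inequality in the direction we want.

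Combining the two steps gives $r(W) = \sum_i |W_i|/d'_i \le \sum_i g_i$. To finish, I would note that $\sum_i g_i = 2\mu|E|$: summing $g_i$ over all vertices counts each good edge once at each of its two endpoints, and by hypothesis a $\mu$-fraction of the $|E|$ edges are good. This yields $r(W) \le 2\mu|E|$, as desired.

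I do not expect a serious obstacle here; the one thing to get right is the bookkeeping around the central vertex. The weight $1/d'_i$ is keyed to the center, and the counting bound $|W_i| \le g_i\, d'_i$ must be organized around the same center so that the $d'_i$ factors cancel cleanly in the final sum. Degree-one vertices require no special care: they are the central vertex of no wedge and contribute nothing to either side, so the apparent division by $d'_i = 0$ never arises.
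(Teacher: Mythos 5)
Your proof is correct and takes essentially the same approach as the paper: the paper's one-line argument writes $r(W) = \sum_i w^g_i/d'_i \le \sum_i g_i = 2\mu|E|$, where the middle inequality is exactly your charging bound $|W_i| \le g_i\, d'_i$ (each good wedge centered at $i$ contains a good edge at $i$, and each such edge lies in at most $d'_i$ wedges centered at $i$). You have simply made explicit the counting that the paper leaves implicit.
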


\begin{proof}
Let $w^g_i$ be the number of good wedges which have $i$ as their central vertex. Then
\[ r(W) = \sum_i \frac{w^g_i}{d'_i} \le \sum_i g_i = 2\mu|E|. \qedhere\]
\end{proof}

The next two claims bound the triangle weight lost by cleaning any particular edge.

\medskip
\begin{claim} \label{clm:greenclean} If a good edge $e = (i,j)$ is
  removed during cleaning, then $r(T^c_e) \leq (3\eps/\gc^2)r(W^c_e)$.
\end{claim}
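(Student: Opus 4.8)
The plan is to set up a purely local charging at the instant $e=(i,j)$ is cleaned, comparing the triangle weight $r(T^c_e)$ that dies with $e$ against the wedge weight $r(W^c_e)$ that dies with it. Write $u = 1/d'_i$ and $v = 1/d'_j$ for the (frozen, original-degree) weights attached to $i$ and $j$, and let $p = d_i(H)-1$ and $q = d_j(H)-1$ be the reduced degrees of $i,j$ in the graph $H$ present at the moment $e$ is removed. The whole difficulty is that the \emph{counts} of surviving wedges and triangles are governed by the current degrees $p,q$, while the \emph{weights} are frozen at $d'_i,d'_j$; the goal is to translate between the two coordinate systems at a controlled cost.

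First I would record the two count facts. Every wedge through $e$ still alive in $H$ dies exactly when $e$ is removed, so $W^c_e$ consists of the $p$ wedges centered at $i$ and the $q$ wedges centered at $j$, giving $|W^c_e| = p+q$, while $|T^c_e| = t_e(H)$ is the current codegree of $i,j$. Since $e$ is removed by the cleaning rule, its Jaccard similarity in $H$ is below $\eps$, and exactly as in the proof of \Clm{clean} this gives $|T^c_e| \le \eps\,|W^c_e| = \eps(p+q)$.

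Next come the weights. Because $e$ is good, every wedge containing it is good, so $r(W^c_e) = pu + qv$ ($p$ wedges of weight $u$ at $i$, $q$ of weight $v$ at $j$). For a triangle $\cT=(i,j,k)\in T^c_e$, the endpoints of $e$ contribute $u+v$, and the apex $k$ contributes $1/d'_k$ only when $\cT$ carries a second good edge, i.e.\ one of $(i,k),(j,k)$ is good; in that case \Clm{wbalance} keeps $d'_k$ within a factor $\gc$ of $d'_i$ or $d'_j$, so $1/d'_k \le \max(u,v)/\gc$. The key accounting choice is to charge the apex term to $\max(u,v)/\gc$ rather than to $u+v$: using $u+v \le 2\max(u,v)$ and $\gc \le 1$ this yields $r(\cT) \le u+v+\max(u,v)/\gc \le 3\max(u,v)/\gc$, and summing over $T^c_e$ gives $r(T^c_e) \le 3\eps(p+q)\max(u,v)/\gc$.

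Finally I would close the loop with the inequality $\gc(p+q)\max(u,v) \le pu+qv = r(W^c_e)$. Taking $u \ge v$ without loss of generality (so $d'_i \le d'_j$), \Clm{wbalance} applied to the good edge $e$ gives $v \ge \gc u$, whence $pu+qv \ge up + \gc uq \ge \gc u(p+q) = \gc\max(u,v)(p+q)$; substituting into the previous bound produces exactly $r(T^c_e) \le (3\eps/\gc^2)\,r(W^c_e)$. I expect the main obstacle to be precisely the step that spends the two factors of $\gc$ without waste: the apex-weight bound must be absorbed into $\max(u,v)/\gc$ (not $u+v$), and the remaining factor of $\gc$ must come from the degree-balance of the good edge $e$ in the final display. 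A cruder bound that charges the apex to $u+v$ only yields a constant of $(1+1/\gc)^2$, which exceeds $3/\gc^2$ once $\gc > \sqrt{3}-1$, so the sharper $\max$-based charging is what makes the stated constant correct for all $\gc \in (0,1]$.
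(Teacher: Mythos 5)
Your proof is correct and follows essentially the same route as the paper's: a local charging at the instant $e$ is removed, combining the count inequality $|T^c_e|\le\eps|W^c_e|$ (from the cleaning rule) with the degree balance of \Clm{wbalance}. The only difference is bookkeeping: the paper normalizes all weights by the larger degree $d'=\max(d'_i,d'_j)$, spending both factors of $\gc$ in the per-triangle bound $r(t)\le 3|T^c_e|/(d'\gc^2)$ and using the crude per-wedge bound $r(W^c_e)\ge|W^c_e|/d'$, whereas you spend one factor of $\gc$ in the per-triangle bound $3\max(u,v)/\gc$ and the other in relating $(p+q)\max(u,v)$ to $r(W^c_e)=pu+qv$.
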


\begin{proof}
Assume that
$d_i \ge d_j$. Let $d=d_i$.
We first lower bound $r(W^c_e)$ as a function of $|W^c_e|$. For any $w \in W^c_e$, $w$ has at least one good edge, and has either $i$ or $j$ as its central vertex. Hence $r(w) \ge \min\{1/d'_i,1/d'_j\}=1/d'$, and \[ r(W^c_e) \ge \frac{|W^c_e|}{d'}. \]

We now upper bound $r(T^c_e)$ as a function of $|T^c_e|$. Consider triangle $t = (i,j,k) \in T^c_e$. If $(i,j)$ is the only good edge in $t$, then $r(t) = 1/d'_i + 1/d'_j \le 2/d'\gc$, since $d'_j \ge d'\gc$ by \Clm{wbalance}. If $t$ has at least 2 good edges, then $k$ is at most 2 good edges away from $i$, and $d'_k \ge d'\gc^2$. This gives $r(t) = 1/d'_i+1/d'_j+1/d'_k \le 3/d'\gc^2$. Hence \[ r(T^c_e) \le \max\left\{\frac{3}{d'\gc^2},\frac{2}{d'\gc}\right\}|T^c_e| = \frac{3}{d'\gc^2}|T^c_e|. \]

Now, $|T^c_e| \le \eps |W^c_e|$, since $J_e\le \eps$ at the time of cleaning. Hence we have
\[ r(T^c_e) \le \frac{3}{d'\gc^2}|T^c_e| \le \frac{3\eps}{d'\gc^2}|W^c_e| \le \frac{3\eps}{\gc^2}\, r(W^c_e) \]
as desired.
\end{proof}

\medskip
\begin{claim} \label{clm:nongreenclean} If a bad edge $e = (i,j)$ is removed during cleaning, $r(T^c_e) \le 4/\gc$. \end{claim}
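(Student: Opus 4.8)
The plan is to exploit the fact that every triangle in $T^c_e$ contains the bad edge $e=(i,j)$ and is therefore of the form $(i,j,k)$ for a common neighbor $k$; since $(i,j)$ is bad, such a triangle has at most two good edges, both among $(i,k)$ and $(j,k)$. I would then read off directly from the definition of $r$ which reciprocal-degree terms actually get charged. Writing $A$ for the set of $k$ with $(i,j,k)\in T^c_e$ and $(i,k)$ good, and $B$ for the analogous set with $(j,k)$ good, one checks case by case that the term $1/d'_i$ is charged exactly for the triangles indexed by $A$, the term $1/d'_j$ exactly for those indexed by $B$, and the term $1/d'_k$ exactly when $k\in A\cup B$ (i.e.\ when $k$ is an endpoint of some good edge of the triangle). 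This yields the exact identity
\[ r(T^c_e) = \frac{|A|}{d'_i} + \frac{|B|}{d'_j} + \sum_{k\in A\cup B}\frac{1}{d'_k}. \]

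Next I would bound the three pieces. Each triangle of $T^c_e$ corresponds to a distinct common neighbor $k$, and all such $k$ lie in $N(i)\setminus\{j\}$ (respectively $N(j)\setminus\{i\}$), so $|A|\le |T^c_e|\le d'_i$ and $|B|\le d'_j$; hence the first two terms are each at most $1$. For the final sum I would split according to the good edge: for $k\in A$ the edge $(i,k)$ is good, so \Clm{wbalance} gives $d'_k\ge\gc\,d'_i$ and thus $1/d'_k\le 1/(\gc\,d'_i)$, and summing over at most $d'_i$ such $k$ contributes at most $1/\gc$; symmetrically the terms from $k\in B$ contribute at most $1/\gc$. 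Combining, $r(T^c_e)\le 1+1+2/\gc \le 4/\gc$, where the last step uses $\gc\le 1$.

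The one point that really matters — and the reason the weight function is defined as it is — is that $1/d'_k$ is charged only when $k$ sits on a good edge of the triangle. A triangle $(i,j,k)$ in which both $(i,k)$ and $(j,k)$ are bad contributes zero weight, and this is essential: for such a ``doubly bad'' common neighbor there is no degree-balance guarantee, so $d'_k$ could be tiny and $1/d'_k$ uncontrollably large. Restricting the charge to good-edge endpoints is exactly what lets \Clm{wbalance} tame every surviving $1/d'_k$ term, so there is no genuine obstacle once the bookkeeping is in place; the only care required is to match each reciprocal-degree term to the precise set of triangles that generates it.
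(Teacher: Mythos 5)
Your proof is correct and takes essentially the same route as the paper's: both observe that only triangles with a good edge at $i$ or at $j$ carry weight, bound the number of such triangles through the degrees $d'_i$ and $d'_j$, and invoke \Clm{wbalance} to control every $1/d'_k$ term, arriving at $2 + 2/\gc \le 4/\gc$. Your exact charging identity over the sets $A$ and $B$ is just a more explicit write-up of the inequality the paper dismisses as ``not too hard to see.''
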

\begin{proof}
The only triangles with non-zero weight in $T^c_e$ have a good edge to $i$ and/or a good edge to $j$. Let $m_i$ and $m_j$ be the minimum degrees of any vertex connected by a good edge to $i$ and $j$, respectively. It is not too hard to see that
\[ r(T^c_e) \le g_i\left(\frac{1}{d'_i} + \frac{1}{m'_i}\right) + g_j\left(\frac{1}{d'_j} + \frac{1}{m'_j}\right). \]
Plugging in $m'_i \ge \gc d'_i$ (\Clm{wbalance}) and $g_i \le d'_i$ gives the desired result.
\end{proof}

We now combine the observations above to show that cleaning cannot remove all the triangle weight.

\medskip
\begin{claim} \label{clm:rtsgce} $r(T^s) \ge \gc|E|/4$. \end{claim}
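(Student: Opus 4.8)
The plan is to write $r(T^s) = r(T) - r(T^c)$, lower-bound the surviving triangle weight by controlling the weight destroyed during cleaning. By \Clm{rtinit} we already have $r(T) \ge \gc\mu|E|$, so it suffices to show $r(T^c) \le (\gc\mu - \gc/4)|E|$, i.e.\ that cleaning removes strictly less than a $1/4$-sized ``slack'' of triangle weight. The key structural fact I would use is that, exactly as in \Clm{clean} and the proof of \Thm{cftriangle}, the destroyed triangle sets $T^c_e$ (and the destroyed wedge sets $W^c_e$) associated with the successively removed edges $e$ are pairwise disjoint — each destroyed triangle or wedge is charged to the first of its edges to be cleaned. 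Hence $r(T^c) = \sum_e r(T^c_e)$ over edges $e$ removed during cleaning, and I would split this sum according to whether $e$ is good or bad.

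For the \emph{good} removed edges, \Clm{greenclean} gives $r(T^c_e) \le (3\eps/\gc^2)\, r(W^c_e)$. Because the sets $W^c_e$ are pairwise disjoint subsets of $W$ and $r \ge 0$, summing these local bounds collapses to a single global bound:
\[ \sum_{\text{good } e} r(T^c_e) \le \frac{3\eps}{\gc^2}\, r(W) \le \frac{3\eps}{\gc^2}\cdot 2\mu|E|, \]
using \Clm{rwinit}. The hypothesis $\eps \le \gc^3/12$ makes $6\eps/\gc^2 \le \gc/2$, so this contribution is at most $(\gc\mu/2)|E|$.

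For the \emph{bad} removed edges, \Clm{nongreenclean} bounds each term crudely by $r(T^c_e) \le 4/\gc$, and there are at most $(1-\mu)|E|$ bad edges in the whole graph. Thus their total contribution is at most $(4/\gc)(1-\mu)|E|$, which the hypothesis $\mu \ge 1-\gc^2/32$ renders at most $(\gc/8)|E|$. Combining the two cases gives $r(T^c) \le (\gc\mu/2 + \gc/8)|E|$, whence
\[ r(T^s) = r(T) - r(T^c) \ge \left(\gc\mu - \frac{\gc\mu}{2} - \frac{\gc}{8}\right)|E| = \left(\frac{\gc\mu}{2} - \frac{\gc}{8}\right)|E|. \]
Finally $\mu \ge 1-\gc^2/32 \ge 31/32$ (using $\gc \le 1$) makes the bracket at least $23\gc/64 \ge \gc/4$, giving the claim with room to spare.

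The main obstacle is not any single calculation but the bookkeeping that makes the aggregation legitimate: one must verify that the per-edge estimates of \Clm{greenclean} sum to a bound by the \emph{total} wedge weight $r(W)$, which relies on the disjointness of the destroyed wedge sets and the fact that $r$ is nonnegative. The secondary delicate point is keeping the two parameter budgets balanced — $\eps \le \gc^3/12$ pays for the good edges and $\mu \ge 1-\gc^2/32$ pays for the bad edges — so that each destroyed-weight term is a strict fraction of $r(T)$, leaving more than $\gc|E|/4$ behind.
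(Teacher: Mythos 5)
Your proposal is correct and follows essentially the same argument as the paper: the same decomposition $r(T^s) = r(T) - r(T^c)$, the same good/bad edge split with \Clm{greenclean} and \Clm{nongreenclean}, the same aggregation via disjointness and \Clm{rwinit}, and the same parameter budgets yielding $r(T^c) \le (\gc\mu/2 + \gc/8)|E|$. The only cosmetic difference is the final numerical step, where you use $\mu \ge 31/32$ while the paper uses the weaker $\mu \ge 3/4$; both suffice.
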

\begin{proof}
We have
\begin{flalign*}
&& r(T^c) & = \sum_{\text{good } e} r(T^c_e) + \sum_{\text{bad } e} r(T^c_e) \\
&& & \le \sum_{\text{good } e} \frac{3\eps}{\gc^2}r(W_e^c) + \sum_{\text{bad } e} \frac{4}{\gc} && \text{by \Clm{greenclean} and \Clm{nongreenclean}} \\
&& & \le \frac{3\eps}{\gc^2}r(W) + \frac{4}{\gc}(1-\mu) |E| \\
&& & \le \frac{6\eps\mu|E|}{\gc^2} + \frac{4(1-\mu) |E|}{\gc} && \text{\phantom{\Clm{nongreenclean} and} by \Clm{rwinit}}\\
&& & \le \frac{\gc\mu|E|}{2} + \frac{\gc|E|}{8},
\end{flalign*}
where the last inequality follows from the bounds on $\eps$ and $\mu$ in the theorem statement. Hence
\begin{flalign*}
&& r(T^s) & = r(T)-r(T^c) \\
&& & \ge \gc\mu |E| - \left(\frac{\gc\mu|E|}{2} + \frac{\gc|E|}{8}\right) && \text{by \Clm{rtinit}} \\
&& & \ge \gc|E|/4,
\end{flalign*}
since $\mu \ge 3/4$.
\end{proof}

Finally, we show that if a subgraph of $G$ has high triangle weight, it must also have a lot of edges. Though the claim is stated in terms of $T^s$, the proof would hold for any $H\subset G$. This can be thought of as a moral converse to \Clm{rtinit}.

\medskip
\begin{claim} \label{clm:esrts} $r(T^s) \le |E^s|.$ \end{claim}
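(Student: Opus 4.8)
The plan is to expand $r(T^s)$ as a single sum over vertices, bound each vertex's contribution by a purely combinatorial count of surviving triangles through it, and observe that the $1/d'_i$ weights cancel almost exactly against this count, leaving precisely $|E^s|$. The point is that no charging of weights to individual edges is needed: swapping the order of summation already exposes the right structure.

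First I would rewrite the weight function. By the definition of $r$, a triangle $\cT$ contributes the term $1/d'_v$ for exactly those vertices $v$ of $\cT$ that are incident to a good edge of $\cT$ (for a triangle with two or more good edges these are all three vertices, since two edges of a triangle cover all three vertices; for a triangle with a single good edge they are its two endpoints; for a triangle with no good edge the contribution is $0$). Swapping the order of summation, I would write
\[ r(T^s) = \sum_{\cT \in T^s}\ \sum_{\substack{v \in \cT\\ v \text{ incident to a good edge of } \cT}} \frac{1}{d'_v} = \sum_{v} \frac{a_v}{d'_v}, \]
where $a_v$ denotes the number of triangles of $T^s$ in which $v$ is incident to a good edge. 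Every vertex with $a_v>0$ lies in a triangle, so $d_v \ge 2$ and $d'_v \ge 1$, and all terms are well defined.

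Next I would bound $a_v$ combinatorially. For a vertex $i$ let $d^s_i$ be the number of edges of $E^s$ incident to $i$. Any triangle counted by $a_v$ has the form $(v,j,k)$ with $(v,j),(v,k)\in E^s$, hence is determined by an unordered pair of distinct $E^s$-neighbors of $v$; dropping the good-edge restriction only increases the count, so $a_v \le \binom{d^s_v}{2}$. The key step is that $E^s$-neighbors of $v$ are in particular neighbors of $v$, so $d^s_v \le d_v$ and therefore $d^s_v - 1 \le d'_v$. This gives $\binom{d^s_v}{2} = \tfrac12 d^s_v(d^s_v-1) \le \tfrac12 d^s_v d'_v$, and the factor $d'_v$ cancels the weight: $a_v/d'_v \le \tfrac12 d^s_v$. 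Summing over $v$ and invoking the handshake identity $\sum_v d^s_v = 2|E^s|$ (each edge of $E^s$ is counted at both endpoints) yields $r(T^s) \le \tfrac12\sum_v d^s_v = |E^s|$, as claimed.

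The main obstacle is getting the constant exactly right, with no spurious factor of $2$. The naive routings — charging each term $1/d'_v$ to a good edge incident to $v$ — lose a factor of $2$, because a triangle can present two good edges at the same vertex and an edge can be charged from both of its endpoints. The clean bound comes instead from the quadratic count $\binom{d^s_v}{2}$ together with $d^s_v - 1 \le d'_v$, which is precisely what converts the $1/d'_v$ weight into a per-vertex contribution of $\tfrac12 d^s_v$ and matches $|E^s|$ on the nose. I note the argument uses neither triangle density nor goodness beyond the definition of $r$, so it applies verbatim to any subgraph of $G$, consistent with the remark that \Clm{rtinit} is a moral converse of this claim.
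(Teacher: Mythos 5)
Your proof is correct and follows essentially the same route as the paper's: both swap the order of summation to get a per-vertex sum of the form $\sum_v (\text{triangle count at } v)/d'_v$, bound the triangle count by $\binom{d^s_v}{2}$ (the paper phrases this via the auxiliary graph $H=(V,E^s)$ and its degrees $d_v(H)=d^s_v$), and then use $d^s_v - 1 \le d'_v$ together with the handshake identity to conclude $r(T^s) \le \tfrac12\sum_v d^s_v = |E^s|$. The only differences are cosmetic: you keep exact equality in the vertex expansion by tracking only vertices incident to good edges, where the paper simply upper-bounds each triangle's weight by all three reciprocal terms.
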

\begin{proof}
Let $H = (V,E^s)$. The triangles of $H$ are exactly $T^s$. We have
\[ r(T^s) \le \sum_{(i,j,k)\in T(H)} \frac{1}{d'_i(G)}+\frac{1}{d'_j(G)}+\frac{1}{d'_k(G)} = \sum_i \frac{t_i(H)}{d'_i(G)}, \]
where $t_i(H)$ is the number of triangles in $H$ incident to $i$. From here, we compute
\[\sum_i \frac{t_i(H)}{d'_i(G)} \le \frac{\binom{d_i(H)}{2}}{d'_i(G)} \le \sum \frac{d_i(H)}{2} = |E^s| \]
as desired.
\end{proof}

The last two claims together imply that the cleaning phase does not destroy too many edges. The rest of the proof is nearly identical to that of  \Thm{cftriangle} from the $\eps$-triangle dense case.

\begin{proof} (of \Thm{edges})
As noted above, the tightly-knit family and triangle conditions
follow directly from \Thm{cftriangle}.

Let $D_k$ be the edges destroyed in the $k$th
extraction, and let $E_k$ be the edges in $G|_{S_k}$. By
\Thm{extract}, $|E_k| = \Omega(\eps |D_k|)$. Since $E^c$, $D_k$, and
$E_k$ (over all $k$) partition $E$, we have $\sum_k |E_k| =
\Omega(\eps(|E|-|E^c|))$. Since $|E^c| + |E^s| \le |E|$, we have $\sum_k |E_k| = \Omega(\eps |E^s|)$. Finally, by \Clm{rtsgce} and \Clm{esrts}, $|E^s| = \Omega(\gc |E|)$, and so $\sum_k |E_k| = \Omega(\eps \gc |E|)$ as desired.
\end{proof}

\section{Triangle-dense graphs: the rogues' gallery} \label{sec:bad}

This section provides a number of examples of triangle-dense graphs.
These examples show, in particular, that radius-1 clusters are not
sufficient to capture a constant fraction of a triangle-dense graph's
triangles, and that tightly knit families cannot always capture a
constant fraction of a triangle-dense graph's edges.



\begin{asparaitem}

\item \textbf{Why radius 2?} Consider the complete tripartite graph. This
is everywhere triangle-dense. If we removed the $1$-hop neighborhood of any vertex,
we would destroy a $1-\Theta(1/n)$-fraction of the triangles. The only tightly-knit family
in this graph is the entire graph itself.

\item \textbf{More on 1-hop neighborhoods.} All 1-hop neighborhoods in an everywhere
triangle-dense graph are edge-dense, by \Lem{balance}. Maybe we could just take
the 1-hop neighborhoods of an independent set, to get a tightly-knit family? Of course,
the clusters would only be edge-disjoint (not vertex disjoint).

We construct an everywhere triangle-dense
graph where this does not work. There are $m+1$ disjoint sets of vertices,
$A_1, \dots, A_m, B$ each of size $m$. The graph induced on $\cup_k A_k$
is just a clique on $m^2$ vertices. Each vertex $b_k \in B$ is connected
to all of $A_k$. Note that $B$ is a maximal independent set, and the 1-hop neighborhoods
of $B$ contain $\Theta(m^4)$ triangles in total. However, the total number of triangles in the graph is $\Theta(m^6)$.

\item \textbf{Why we can't preserve edges.} \Res{main} only guarantees
  that the
tightly-knit family contains a constant fraction of the graph's
  triangles, not its edges.
Consider a graph that has a clique on $n^{1/3}$ vertices and an
arbitrary (or say, a random) constant-degree graph on the remaining $n-n^{1/3}$ vertices.
No tightly-knit family can involve vertices outside the clique, so most
of the edges must be removed. Of course, most edges in this case have
low Jaccard similarity.

\end{asparaitem}

In general, the condition of constant triangle density is fairly weak
and is met by a wide variety of graphs.
The following
two examples provide further intuition for this class of graphs.
\begin{asparaitem}
\item \textbf{A triangle-dense graph far from a disjoint union of cliques.}
Define the graph Bracelet$(m,d)$, for $m$ nodes of degree $d$, when $m > 4d/3$, as follows:
Let $B_1, \ldots, B_{3m/d}$ be sets of $d/3$ vertices each put in cyclic order. Note that $3m/d \ge 4$. Connect each vertex in $B_k$ to each vertex in $B_{k-1}, B_k$ and
$B_{k+1}$. Refer to \Fig{bracelent}. This is an everywhere triangle-dense $d$-regular graph on $m$ vertices.
Nonetheless, it is maximally far (i.e., $O(md)$ edges away) from a
disjoint union of cliques.
A tightly-knit family is obtained by taking $B_1 \cup B_2 \cup B_3$, $B_4 \cup B_5 \cup B_6$, etc.

\begin{figure}
\centering
\includegraphics{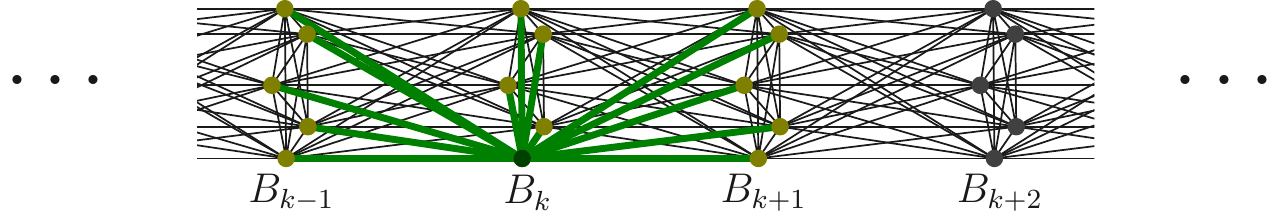}
\label{fig:bracelent}\caption{Bracelet graph with $d/3=5$. }
\end{figure}

\item \textbf{Hiding a tightly-knit family.} Start with $n/3$ disjoint
triangles. Now, add an arbitrary bounded-degree graph (say, an expander) on these
$n$ vertices. The resulting graph has constant triangle density, but most
of the structure is irrelevant for a tightly-knit family.
\end{asparaitem}

\section{Recovering a planted clustering} \label{sec:bbg}

This section gives an algorithmic application of our decomposition
procedure to recovering a ``ground truth'' clustering.
We study the planted clustering model defined by Balcan, Blum, and Gupta~\cite{BaBlGu13}, and show  that our algorithm gives guarantees
similar to theirs.
We do not subsume the results in \cite{BaBlGu13}.  Rather,
we observe that a graph problem that arises as a subroutine in their
algorithm is essentially that of finding a tightly-knit family in a
triangle-dense graph.
Their assumptions ensure that there is (up to minor perturbations) a
unique such family.

The main setting of~\cite{BaBlGu13} is as follows. Given a set of
points $V$ is some metric space, we wish to $k$-cluster them according
to some fixed objective function, such as the $k$-median objective.
Denote the optimal $k$-clustering by $\cC$
and the value by $OPT$.
The instance satisfies \emph{$(c,\epsilon)$-approximation-stability}
if for any $k$-clustering $\cC'$ of $V$ with objective
function value at most $c
\cdot OPT$, the ``classification distance" between $\cC$ and $\cC'$ is at most
$\epsilon$.
Thus, all solutions with near-optimal objective function value must be
structurally close to $\cC$.

In~\cite[Lemma 3.5]{BaBlGu13}, an approximation-stable $k$-median
instance is converted into a \emph{threshold graph}.
This graph $G = (V,E)$ contains $k$ disjoint cliques
$\{X_a\}_{a=1}^k$, such that the cliques do not have any common
neighbors.
These cliques correspond to clusters in the ground-truth clustering,
and their existence is a consequence of the approximation stability
assumption.
The aim is to get a $k$-clustering sufficiently close to
$\{X_a\}$. Formally, a $k$-clustering $\{S_a\}$ of $V$ is
\emph{$\Delta$-incorrect} if there is a permutation $\sigma$ such that
$\sum |X_{\sigma(a)} \setminus S_a| \le \Delta$.

Let $B = V \setminus \bigcup_a X_a$.
It is shown in~\cite{BaBlGu13} that when $|B|$ is small, good
approximations to $\{X_a\}$ can be found efficiently.
From our perspective, the interesting point is that when $|B|$ is much
smaller than $\sum_a |X_a|$,
the threshold graph has high triangle density.
Furthermore, as we prove below, the clusters
output by the extraction procedure of
\Thm{extract} are very close to the $X_a$'s of the threshold graph.

Suppose we want a $k$-clustering of a threshold graph. We iteratively
use the extraction procedure (from \Sec{cluster})
$k$ times to get clusters $S_1, S_2, \ldots, S_{k}$.
In particular, recall that at each step we choose a vertex $s_i$ with the current highest degree $d_i$. We set $N_i$ to be the $d_i$ neighbors of $s_i$ at this time, and $R$ to be the $d_i$ vertices with the largest number of triangles to $N_i$. Then, $S_i = \{i\} \cup N_i \cup R$.
The exact procedure of \Thm{cftriangle}, which includes cleaning, also works fine. Foregoing the cleaning step does necessitate a small technical change to the
extraction procedure: instead of adding all of $R$ to $S$, we only add
the elements of $R$ which have a positive number of triangles to
$N_i$.

We use the notation $N^*(U) = N(U) \cup U$. So $N^*(X_a) \cap N^*(X_b)
= \emptyset$, when $a \neq b$.
Unlike~\cite{BaBlGu13}, we assume that $|X_a|\ge 3$.
The following parallels the main theorem
of~\cite[Theorem 3.9]{BaBlGu13}, and the proof is similar to theirs.
\medskip
\begin{theorem} \label{thm:bbg} The output of the clustering algorithm above
is $O(|B|)$-incorrect on $G$.
\end{theorem}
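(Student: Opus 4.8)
The plan is to lean entirely on the rigid structure of threshold graphs rather than on triangle density per se. The two facts I would record first are: (i) distinct cliques are non-adjacent and share no common neighbor, i.e. $N^*(X_a)\cap N^*(X_b)=\emptyset$ for $a\ne b$; and (ii) as an immediate consequence, every bad vertex of $B$ is adjacent to vertices of at most one clique, so the ``private'' sets $B_a:=N(X_a)\cap B$ are pairwise disjoint and $\sum_a|B_a|\le|B|$. I would also record the degree estimates that a clique vertex $x\in X_a$ has $d_x\in[|X_a|-1,\;|X_a|-1+|B_a|]$, while a bad vertex adjacent to no clique has degree at most $|B|-1$. These facts localize each extraction to a single clique together with its private bad vertices, and they are exactly what will let me charge all errors to $|B|$.

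Next I would establish a \emph{single-step capture} claim, the analogue of \Thm{extract} specialized to threshold graphs. For each clique $X_a$ let $j_a$ be the first step whose center $s_{j_a}$ lies in $N^*(X_a)$; by fact~(i) a center lies in at most one such neighborhood, so this is well defined, and it is injective since an absorbed clique is deleted and no later center can reach it. The claim is that at step $j_a$ the clique $X_a$ is still entirely present (no earlier cluster met it) and $S_{j_a}\supseteq X_a$. If $s_{j_a}\in X_a$ this is immediate, since $X_a\setminus\{s_{j_a}\}\subseteq N(s_{j_a})=N_{j_a}$. The delicate case, which I expect to be the \textbf{main obstacle}, is $s_{j_a}\in B_a$: here I would combine the maximum-degree rule with the completeness of $X_a$ to show that any clique vertex not already in $N_{j_a}$ is adjacent to all of $X_a\cap N_{j_a}$, hence carries at least $\binom{|X_a\cap N_{j_a}|}{2}$ triangles into $N_{j_a}$; using $|X_a|\ge 3$ this dominates the $\theta_j$-ordering, so the missing clique vertices all land in $R$, whose size $d_{j_a}$ is large enough to hold them. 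Ruling out a clique being split across two clusters, or stranded by a center that touches too little of it, is precisely what the hypotheses $|X_a|\ge 3$ and the maximum-degree choice are there to prevent.

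Finally I would assemble the $O(|B|)$-incorrectness bound. Setting $\sigma(j_a)=a$ gives an injection from capturing clusters to cliques; extending it to a permutation, the capture claim makes the term $|X_{\sigma(j_a)}\setminus S_{j_a}|$ vanish for every fully captured clique. The residual error then comes from exactly two sources, both chargeable to $B$. First, a clique vertex $v\in X_b$ can be swept into a \emph{neighboring} cluster $S_j$ (with $s_j\notin N^*(X_b)$) only through a bridge in $B$: since $X_b\cap N_j=\emptyset$, such a $v$ must have at least two neighbors in $B_b\cap N_j$, and I would charge these displaced vertices to $B_b$. Second, a degree comparison shows that a center can fail to lie in any surviving clique's neighborhood only once every remaining clique has size at most $|B|$ (a bad vertex has degree at most $|B|-1$, while a vertex of a larger clique outranks it), so any clique that is missed outright is small and its vertices are likewise charged to $B$. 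Summing these contributions against the disjoint $B_a$'s and using $\sum_a|B_a|\le|B|$ yields $\sum_a|X_{\sigma(a)}\setminus S_a|=O(|B|)$, as desired.
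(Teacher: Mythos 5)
Your plan hinges on the \emph{single-step capture} claim---that for the first cluster whose center $s_{j_a}$ lies in $N^*(X_a)$ we have $X_a$ fully intact and $S_{j_a}\supseteq X_a$---and this claim is false; the paper deliberately never asserts it. Both halves fail. The first half (``no earlier cluster met it'') contradicts your own later accounting of displaced vertices, and indeed earlier clusters can swallow clique vertices wholesale. The second half fails in exactly the case you flag as the main obstacle, $s_{j_a}\in B_a$: a missing clique vertex $v\in X_a\setminus N_{j_a}$ does carry $\binom{|X_a\cap N_{j_a}|}{2}$ triangles into $N_{j_a}$, but nothing makes this ``dominate the $\theta_j$-ordering.'' Bad vertices in $N_{j_a}$ may themselves form a large clique, so both they and further bad vertices adjacent to them can carry \emph{more} triangles into $N_{j_a}$ than $v$ does, and all of these compete for the $d_{j_a}$ slots of $R$. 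Concretely, let $X_a$ have size $m$, let a single bridge $s$ be adjacent to exactly half of $X_a$ and to a clique $Q$ of $m/2+2$ bad vertices, and add $m/2-1$ further bad vertices each adjacent to all of $Q$; then $s$ has strictly maximum degree, yet the bad vertices outrank the missing half of $X_a$ and fill essentially all of $R$, leaving $\Theta(m)$ clique vertices uncaptured. The theorem survives only because $|B|=\Theta(m)$ there, which is precisely why the paper's \Clm{l2} proves the weaker but sufficient bound $|X'_a\setminus S_a|\le|N_a\cap B|+1\le|S_a\cap B|$ rather than full capture.

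The second gap is that neither of your residual charges sums to $O(|B|)$. For displaced vertices you charge to bridges in $B_b$, but that charging has unbounded multiplicity: a single adjacent pair $u_1,u_2\in B_b$ lying inside a foreign $N_j$ puts \emph{every} vertex of $X_b$ in a triangle with an edge of $N_j$, so all $m$ vertices of $X_b$ can land in $R_j$ while $|B_b|=2$ (take $u_1,u_2$ adjacent to each other and to all of $X_b$, and a center $s_j$ adjacent to $u_1,u_2$ plus enough pendant bad vertices to have top degree); then the displaced set has size $m\gg\sum_a|B_a|$, even though it is still $O(|B|)$. For missed cliques, bounding each surviving clique by $|B|$ and summing gives $k|B|$, not $O(|B|)$, absent an injection you do not supply. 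The mechanism you are missing---and the heart of the paper's proof---is to charge every error to $|S_b\cap B|$, the bad vertices \emph{captured inside} the offending output cluster, and then use disjointness of the clusters $S_b$ to sum all charges to $O(|B|)$: \Clm{l1} shows foreign clique vertices occupy at most $|N_b\cap B|$ slots of $R_b$ unless $S_b$ is at least one-third bad anyway, \Clm{l2} bounds left-behind vertices by $|N_a\cap B|+1$, and \Clm{l3} bounds missed-clique remnants by $|S_a\cap B|$ via the maximum-degree rule. Your structural facts (disjointness of the $N^*(X_a)$'s, the degree estimates) are correct and also appear in the paper, but the accounting must run through the clusters' internal bad vertices, not through the bridge sets $B_a$.
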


\begin{proof}
We first map the algorithm's clustering to the true clustering
$\{X_a\}$.
Our algorithm outputs $k$ clusters, each with an associated ``center"
(the starting vertex).
These are denoted $S_1, S_2, \ldots,$ with centers $s_1, s_2, \ldots,$
in order of extraction.
We determine if there exists some true cluster $X_a$, such that $s_1
\in N^*(X_a)$.
If so, we map $S_1$ to $X_a$. (Recall the $N^*(X_a)$'s are disjoint,
so $X_a$ is unique if it exists.)
If no $X_a$ exists, we simply do not map $S_1$.
We then perform this for $S_2, S_3, \ldots$, except that we do not map
$S_k$ if we would be mapping it to an $X_a$ that has previously been
mapped to.
We finally end up with a subset $P \subseteq [k]$,
such that for each $a \in P$, $S_a$ is mapped to some $X_{a'}$.
By relabeling the true clustering, we can assume that for all $a \in
P$, $S_a$ is mapped to $X_a$.
The remaining clusters (for $X_{a \notin P}$) can be labeled with an
arbitrary permutation of $[k]\setminus P$.

Our aim is to bound $\sum_{a} |X_a \setminus S_a|$ by $O(|B|)$.

We perform some simple manipulations.
\begin{eqnarray*}
	\bigcup_{a}\, (X_a \setminus S_a) & = & \bigcup_{a \in P} (X_a \setminus S_a) \cup \bigcup_{a \notin P} (X_a \setminus S_a) \\
	& = & \bigcup_{a \in P} (X_a \cap \bigcup_{b < a} S_b) \cup \bigcup_{a \in P} (X_a \setminus \bigcup_{b \leq a} S_b) \cup \bigcup_{a \notin P} (X_a \setminus S_a) \\
	& \subseteq & \bigcup_{a}\, (X_a \cap \bigcup_{b < a} S_b) \cup \bigcup_{a \in P} (X_a \setminus \bigcup_{b \leq a} S_b) \cup \bigcup_{a \notin P} X_a.
\end{eqnarray*}
So we get the following sets of interest.
\begin{asparaitem}

	\item $L_1 = \bigcup_{a} (X_a \cap \bigcup_{b < a} S_b) =
	\bigcup_b (S_b \cap \bigcup_{a > b} X_a)$ is the set of
	vertices that are ``stolen'' by clusters before $S_a$.

	\item $L_2 = \bigcup_{a \in P} (X_a \setminus \bigcup_{b \le
	a} S_b)$ is the set of vertices that are left behind when
	$S_a$ is created.

	\item $L_3 = \bigcup_{a \notin P} X_a$ is the set of vertices
	that are never clustered.

\end{asparaitem}

Note that $\sum_{a} |X_a \setminus S_a| = |\bigcup_a (X_a\setminus
S_a)| \le |L_1| + |L_2| + |L_3|.$ The proof is completed by showing
that $|L_1| + |L_2| + |L_3| = O(|B|)$.
This will be done through a series of claims.

We first state a useful fact.
\medskip
\begin{claim} \label{clm:nb} Suppose for some $b \in
  \{1,2,\ldots,k\}$, $s_b \in
  N(X_b)$. Then $N_b$ is partitioned
into $N_b \cap X_b$ and $N_b \cap B$.
\end{claim}

\begin{proof} Any vertex in $N_b \setminus X_b$
must be in $B$. This is because $N_b$ is contained in a two-hop
neighborhood from $X_b$, which cannot intersect any other $X_a$.
\end{proof}
%
\medskip
\begin{claim} \label{clm:l1} For any $b$, $|S_b \cap \bigcup_{a > b} X_a| \leq 6|S_b \cap B|$.
\end{claim}

\begin{proof} We split into three cases. For convenience, let $U$ be the
set of vertices $S_b \cap \bigcup_{a > b} X_a$. Recall that $|S_b| \le 2d_b$.
\begin{asparaitem}
	\item For some $c$, $s_b \in X_c$: Note that $c\le b$ by the relabeling of clusters.
	Observe that $S_b$ is contained in a two-hop neighborhood of $s_b$, and hence cannot intersect
	any cluster $X_a$ for $a \neq c$. Hence, $U$ is empty.

	\item For some (unique) $c$, $s_b \in N(X_c)$: Again, $c\le b$.
	By \Clm{nb}, $d_b = |N_b| = |N_b \cap X_c| + |N_b \cap B|$.
	Suppose $|N_b \cap B| \geq d_b/3$. Then $|S_b \cap B| \geq |N_b \cap B| \geq d_b/3$.
	We can easily bound $|S_b| \leq 2d_b \leq 6|S_b \cap B|$.

	Suppose instead $|N_b \cap B| < d_b/3$, and hence $|N_b \cap X_c| > 2d_b/3$.
Note that $|N_b \cap X_c|$ is a clique. Each vertex in $N_b \cap X_c$
	makes ${|N_b \cap X_c|-1 \choose 2} \ge \binom{\lfloor 2d_b/3\rfloor}{2}$ triangles in $N_b$. On the other hand,
	the only vertices of $N_b$ that any vertex in $X_a$ for $a \neq c$ can connect to is in $N_b\cap B$. This forms fewer than ${\lfloor d_b/3\rfloor \choose 2}$ triangles in $N_b$. If ${\lfloor d_b/3\rfloor \choose 2} >0$, then $\binom{\lfloor 2d_b/3\rfloor}{2} > {\lfloor d_b/3\rfloor \choose 2}$.

	Consider the construction of $S_b$. We take the top $d_b$ vertices with the most triangles to $N_b$,
	and say we insert them in decreasing order of this number. Note that in the modified version of the algorithm, we only insert them while this number is positive.
	Before any vertex of $X_a$ ($a \neq b$) is added, all vertices of $N_b \cap X_c$ must be added.
	Hence, at most $d_b - |N_b \cap X_c| = |N_b \cap B| \leq |S_b \cap B|$ vertices
	of $\cup_{a \neq b} X_a$ can be added to $S_b$. Therefore, $|U| \leq |S_b \cap B|$.

	\item The vertex $s_b$ is at least distance $2$ from every $X_c$: Note that $N_b \subseteq S_b \cap B$.
	Hence, $|S_b| \leq 2d_b \leq 2|S_b \cap B|$. \qedhere
\end{asparaitem}
\end{proof}
\medskip
%
%
\begin{claim} \label{clm:l2} For any $a \in P$, $|X_a \setminus \bigcup_{b \le a} S_b| \leq |S_a \cap B|$.
\end{claim}

\begin{proof} Since $a \in P$, either $s_a \in X_a$ or $s_a \in N(X_a)$.  Consider the situation of the algorithm after the first $a-1$ sets $S_1, S_2, \ldots, S_{a-1}$
are removed. There is some subset of $X_a$ that remains; call it $X'_a = X_a \setminus \bigcup_{b < a} S_b$.

Suppose $s_a \in X_a$. Since $X'_a$ is still a clique, $X'_a \subseteq N_a$, and $(X_a \setminus \bigcup_{b \le a} S_b)$ is empty.

Suppose instead $s_a \in N(X_a)$.
Because $s_a$ has maximum degree and $X'_a$ is a clique, $d_a \geq |X'_a|-1$. Note that $|X'_a \setminus S_a|$ is what we wish to bound,
and $|X'_a \setminus S_a| \leq |X'_a \setminus N_a|$. By \Clm{nb}, $N_a$ partitions into $N_a \cap X_a = N_a \cap X'_a$
and $N_a \cap B$. We have $|X'_a \setminus N_a| = |X'_a| - |N_a \cap X_a| \leq d_a+1 - |N_a \cap X_a| = |N_a \cap B|+1
\leq |S_a \cap B|$.
\end{proof}
%
\medskip
\begin{claim} \label{clm:l3} $|L_3| \leq |B| + |L_1|$.
\end{claim}

\begin{proof} Consider some $X_a$ for $a \notin P$. Look at the situation when $S_1, \ldots, S_{a-1}$ are removed.
There is a subset $X'_a$ (forming a clique) left in the graph. All the vertices in $X_a \setminus X'_a$ are contained in $L_1$.
By maximality of degree, $d_a \geq |X'_a|-1$. Furthermore, since $a \notin P$, $N_a \subseteq B$ implying $d_a \leq |S_a \cap B|-1$.
Therefore, $|X'_a| \leq |S_a \cap B|$.
We can bound $\bigcup_{a \notin P} (X_a \setminus X'_a) \subseteq L_1$, and $\sum_{a \not\in P} |X'_a| \leq |B|$, completing the proof.
\end{proof}

To put it all together, we sum the bound of \Clm{l1}
and \Clm{l2} over $b \in [k]$ and $a\in P$ respectively to get $|L_1|
\leq 6|B|$ and $|L_2| \leq |B|$.
\Clm{l3} with the bound on $|L_1|$ yields $|L_3| \leq 7|B|$,
completing the proof of \Thm{bbg}. \
\end{proof}

\section{Conclusions}

This paper
proposes a ``model-free'' approach to the analysis of social
and information networks.  We restrict attention to
graphs that satisfy a combinatorial condition --- constant triangle
density --- in lieu of adopting a particular generative model.
The goal of this approach is to develop structural and algorithmic
results that apply simultaneously to all reasonable models of social
and information
networks.  Our main result shows that constant triangle density
already implies significant graph structure: every graph that meets
this condition is, in a precise sense, well approximated by a disjoint
union of clique-like graphs.

Our work suggests numerous avenues for future research.
\begin{enumerate}

\item Can the dependence of the inter-cluster edge and triangle
  density on the original graph's triangle density be improved?

\item The relative frequencies of four-vertex subgraphs also exhibit
  special patterns in social networks --- for example, there are
  usually very few induced four-cycles~\cite{UgBaKl13}.  Is there an
  assumption about four-vertex induced subgraphs, in conjunction with
  high triangle density, that yields a stronger decomposition theorem?

\item Are there interesting additional conditions under which the
  decomposition into a tightly-knit family is essentially unique?

\item Which computational problems are easier for triangle-dense graphs
  than for arbitrary graphs?  Just as planar separator theorems lead
  to faster algorithms and better heuristics for planar graphs than
  for general graphs, we expect our decomposition theorem to be a useful
  tool in the design of algorithms for triangle-dense graphs.

\end{enumerate}

\subsubsection*{Acknowledgements}

We are grateful for the helpful comments provided by Jon Kleinberg,
Johan Ugander, and the anonymous ITCS reviewers.

\bibliographystyle{alpha}
\bibliography{refs}

\end{document}